\def\Tr{\operatorname{Tr}}
\def\>{\rangle}
\def\<{\langle}
\def\id{\mathds{1}}
\newcommand{\XX}{\mathcal{X}}
\newcommand{\YY}{\mathcal{Y}}
\newcommand{\1}{I}
\newcommand{\Lrm}{\mathrm{L}}
\newcommand{\C}{\mathbb{C}}
\def\({\left(}
\def\){\right)}
\def\[{\left[}
\def\]{\right]}
\newtheorem{theorem}{Theorem}
\newtheorem{corollary}{Corollary}
\newtheorem{definition}{Definition}
\newtheorem{lemma}{Lemma}
\newtheorem{proposition}{Proposition}
\begin{document}
\title{Limitations on quantum key repeaters for all key correlated states}
\author{Leonard Sikorski}
\affiliation{Institute of Informatics, National Quantum Information Centre, Faculty of Mathematics,
Physics and Informatics, University of Gdańsk, Wita Stwosza 57, 80-308 Gdańsk, Poland}
\author{Łukasz Pawela}
\affiliation{ Institute of Theoretical and Applied Informatics, Polish Academy of Sciences, Bałtycka 5, 44-100 Gliwice, Poland}
\author{Karol Horodecki}
\affiliation{Institute of Informatics, National Quantum Information Centre, Faculty of Mathematics,
Physics and Informatics, University of Gdańsk, Wita Stwosza 57, 80-308 Gdańsk, Poland}
\affiliation{International Centre for Theory of Quantum Technologies (ICTQT), University of Gdańsk,
80-308 Gdańsk, Poland}
\affiliation{School of Electrical and Computer Engineering, Cornell University, Ithaca, New York 14850, USA}

\date{\today}

\begin{abstract}
Quantum key repeater is the backbone of the future Quantum Internet. It is an open problem to determine, for an arbitrary mixed bipartite state shared between the stations of a quantum key repeater, how much key can be generated between its two end-nodes.
We place a novel bound on the quantum key repeater rate,
which uses the relative entropy distance from, in general,
{\it entangled} quantum states. It allows us to generalize bounds on key repeaters of M. Christandl and R. Ferrara [Phys. Rev. Lett. 119, 220506]. As in the latter article, we consider a scenario used for measurement-device-independent quantum cryptography. The derived bound, although not tighter, holds for a more general class of states, thereby avoiding the NP-hard separability problem. We show that the repeated key of the broad class of key correlated states can exceed twice the one-way distillable entanglement by at most the max-relative entropy of entanglement of its attacked version. We also provide a non-trivial upper bound on the amount of private randomness of a generic independent bit - a state containing one bit of ideal private randomness.  
\end{abstract}

\maketitle
\section{Introduction}
The Quantum Internet (QI) for secure quantum communication is one of the most welcome applications of quantum information theory. In the future, Quantum Internet qubits rather than bits will be sent. However, both quantum and classical communication are subject to notorious noise in the communication channel. This is because the channel is often implemented using an optical fiber with a high attenuation parameter. To overcome this, in the "classical" Internet we are used to, the signal in transit between sender and receiver is amplified several times by intermediate stations through copying. This solution does not work in the case of the Quantum Internet, which transmits qubits, as the signal they carry cannot be amplified on the way due to the quantum no-cloning theorem \cite{Wootters1982,Dieks1982}. The seminal idea of quantum repeaters \cite{repeaters} resolves the no-cloning obstacle.

 A quantum repeater is a physical realization of a protocol that enables two distant stations to share a secure key that can be used for one-time pad encryption. There has been tremendous effort invested in building such a facility, which is now considered the backbone of future QI \cite{Wehner2018}. The most basic quantum repeater, according to the original idea of \cite{repeaters}, consists of an intermediate station $C=C_1C_2$ between two distant stations $A$ and $B$. The stations $A$ and $C_1$ receive an entangled, partially secure quantum state from a source, as do the stations $B$ and $C_2$. The role of intermediate states is to swap the security content, thereby enabling $A$ and $B$ to share a maximally entangled state. In theory, due to the seminal idea of \cite{repeaters} (for other architectures, see \cite{Muralidharan2016} and references therein), this can be achieved by altering two phases: (i) that of entanglement distillation, that is, making the entangled states less noisy i.e., as close as possible to the singlet~state 
\begin{equation}
    |\psi_+\>={1\over \sqrt{2}}(|00\>+|11\>).
\end{equation}
In the simplest setup mentioned above, this step enhances entanglement between $A$ and $C_1$ station
and $C_2$ and $B$ respectively. The second phase (ii) is entanglement swapping \cite{ukowski1993}, i.e., teleporting one subsystem $C_1$ of the state distilled in point (i) on $AC_1$ using the other singlet $C_2B$ to obtain a singlet between $A$ and $B$. Increasing the number of stations results, theoretically, in arbitrarily distant quantum connections realized by means of the singlet state.

One of the main reasons for building the Quantum Internet is the need for secure communication. Distributing pure entangled states is only one of the ways to achieve such communication.
In general, the underlying shared output state of such a secure key distribution protocol is described by the so-called private state - a state containing privacy in ideal form  \cite{pptkey,keyhuge}.
A private state containing (at least) one bit of ideal key is a bipartite state on systems $ABA'B'$ that takes the following form when purified to the system of eavesdropper $E$:
\begin{align}
    {1\over {\sqrt 2}}&\left(|00\>_{AB}\otimes (U_0^{A'B'}\otimes \mathrm{I}_E) |\phi\>_{A'B'E} + \right.\nonumber\\
    &\left.|11\>_{AB}\otimes (U_{1}^{A'B'}\otimes \mathrm{I}_E) |\phi\>_{A'B'E}\right),
    \label{eq:first_pbit}
\end{align}
where $U_{i}$ for $i\in\{0,1\}$ are unitary transformations that act on a system $A'B'$ only and $|\phi\>_{A'B'E}$ is arbitrary state of $A'B'E$. One can see the correspondence of the unitary rotations $U_i$ to the complex phases in a usual maximally entangled state of the form $|\psi_{max}\>= {1\over \sqrt{2}}(e^{\theta_0 i}|00\> + e^{\theta_1 i} |11\>)$. The secure key is obtained by measuring systems $A$ and $B$ of the private state in the computational basis. This is why the system $AB$ is called the key part. The system $A'B'$ plays a passive role of shielding system $AB$ from the system $E$ of Eve, and it is therefore called a 'shield'.

For this reason, the idea of quantum repeaters has been generalized in \cite{BCHW15} to the notion of {\it quantum key repeater}. The quantum key repeater uses the same station setup $AC_1C_2B$. However, the task of the protocol is to output a private state between stations $A$ and $B$ (which need not be maximally entangled in general). The goal is to distribute the secure key not in the form of pure entangled states (the singlet) but in the form of general mixed quantum states that nonetheless contain ideal security. 

There have been established bounds on the performance of protocols for distributing the key, secure against quantum adversaries, using quantum key repeaters. The first obtained works for the so-called states with positive partial transposition \cite{BCHW15}. Later, it was partially generalized to the case of the states that resemble private states in their structure \cite{CF17}. Specifically, in \cite{CF17}, a significant bound has been shown on a variant of the key repeater rate $R$ for the case of two special so-called {\it key correlated states} $\rho_{AC_1}=\rho_{C_2B}\equiv\rho$. To recall this bound below, we also recall the notion of key-attacked states of the key correlated states. Namely, the key correlated states, like private states, have two distinguished subsystems $\rho_{AC_1}\equiv \rho_{AA'C_1C_1'}$ where, with an abuse of notation, we identify system $A $ with a key part of a system $A$ and shielding system $A'$ (similarly for $C_1$). Now, when the key part systems $AC_1$ got measured in computational basis $\{|ij\>_{AC_1}\}$, the state becomes the key-attacked state of the key correlated state, and is denoted as $\hat{\rho}_{AA'C_1C_1'}$ (see Definition \ref{def:kas}), and analogously for $\hat{\rho}_{C_2C_2'BB'})$. With these notions, the bound of \cite{CF17} applies to those key correlated states, the key attacked-state of which are separable, i.e., totally insecure:
\begin{equation}
    R^{C_1C_2\rightarrow A:B}(\rho_{AC_1},\rho_{C_2B})\leq 2E_D^{C_1\rightarrow A}(\rho_{AC_1}).
    \label{eq:mian_CF_bound}
\end{equation}
Here $E_D^{\rightarrow}$ is the {\it one-way distillable entanglement} \cite{Bennett1996}, that is, the amount of entanglement in the form of maximally entangled states that can be distilled from $\rho_{CA}$ via local quantum operations and one-way (from Charlie to Alice) classical communication operations. The $C_1C_2\rightarrow A:B$ denotes the fact that classical communication in the repeater's protocol takes direction from $C_1C_2$ to $AB$ and later arbitrarily between $A$ and $B$ stations. Note that this is precisely the scenario of public communication known in measurement-device-independent cryptography \cite{lo2012measurement, braunstein2012side}, hence the presented bound applies to this relevant quantum-cryptographic scenario (see \cite{christandl2017relative, azuma2016fundamental, pirandola2019end, das2021universal}) in this context.

In what follows, we would like to drop the assumption that the states $\rho_{AC_1}$ and $\rho_{C_2B}$ get separable after an attack on their key parts by measurement in the computational basis. That is, more precisely, that their key-attacked states are separable. The reason is that checking separability for a given quantum state is a computationally hard problem (precisely NP-hard). In that sense, we broaden the class of states for which an extension of the bound of \cite{Christandl2017} holds. We do so by relaxing it slightly so that the resulting bound is not tighter than the previous one but is naturally less stringent.

In the second part of this manuscript, we consider the scenario of private randomness distillation from a bipartite state
by two honest parties \cite{YHW19}. In the latter scenario, the parties share n copies of a bipartite state $\rho_{AB}$ and distill private randomness
in the form of the so-called {\it independent}  states. These are states from which, after local measurements, one can obtain ideal, uniformly random bits for two parties that are private, i.e., decorrelated from the eavesdropper's system. The operations that the parties
can perform in this resource theory are (i) local unitary operations and (ii) sending system via a dephasing channel. In \cite{YHW19}, cases in which communication is not allowed and maximally mixed states are locally accessible to the parties (or are disallowed) were also considered. In all four cases, the achievable rate regions were provided, and they were tight in most cases.

Here, we study the amount of private randomness in generic local independent states. The latter states are bipartite states $\alpha_{AA'B'}$ of the form
\begin{equation}
    \alpha_{AA'B'}= \sum_{i,j=0}^{1}{1\over 2}|i\>\<j|_A\otimes U_i \sigma_{A'B'} U_j^{\dagger}
\end{equation}
possessing one bit of ideal randomness accessible via the measurement in the computational basis on system $A$. The other party, holding system $B'$ is honest in this scenario. We consider a random local independent state and prove that, given large enough systems $d_s$ of $A'$ and $B'$, in the above-mentioned scenarios, the localisable private randomness $R_A$ at the system of Alice is bounded as follows:
\begin{equation}
    R_A(\alpha_{AA'B'})\leq 1 + \frac{1}{2\ln 2}
\end{equation}
and this rate is achievable for asymptotically growing dimension of the shielding system at $A'$, $d_s$, in all four scenarios.

The remainder of the manuscript is organized as follows. Section \ref{sec:technical} is devoted to technical facts and definitions used throughout the rest of the manuscript.
Section \ref{sec:main} provides the bounds
on key repeater rate in terms of distillable entanglement and the R\'enyi relative entropy of entanglement. Section \ref{sec:key_bound} presents a simple bound on the distillable key of a random private state. In Section \ref{sec:ibits}, we also provide bounds on private randomness for a generic independent state. We finalize the manuscript with a short discussion in Section \ref{sec:discussion}. 

\section{Main results}
In this manuscript, we develop a relaxed bound on quantum key repeater rate, which holds for the key-correlated states without the assumption that the $\hat{\rho}$ is separable. This relaxation is essential, as checking separability is, in general, an NP-hard problem \cite{Gurvits}. 
We provide a bound on the key repeater rate using the relative entropy distance to {\it arbitrary} states rather than to separable ones. The freedom in choosing a state in the relative entropy distance is, however, compensated by another relative-entropy-based term (the sandwich R\'enyi relative entropy of entanglement $\tilde{E}_\alpha$ \cite{Hayashi2017})  and a pre-factor $\alpha$
 from the interval $(1,+\infty)$. It reads the following upperboundbound on $R^{C_1C_2\rightarrow A:B}(\rho_{AC_1},\rho_{C_2B})$: 
\begin{align}
    \inf_{\alpha \in (1,\infty)}\left[{\alpha \over \alpha -1}E_D^{C_1C_2\rightarrow A}(\rho_{AC_1}\otimes \rho_{C_2B})  \right. \nonumber\\
    \left.\qquad + \min\{{\tilde E}_{\alpha}(\hat{\rho}_{AC_1}), {\tilde E}_{\alpha}(\hat{\rho}_{C_2B})\}\right],
    \label{eq:main_bound}
\end{align}
where $\hat{\rho}_{AC_1}$ and $\hat{\rho}_{C_2B}$ are the key-attacked states of the key correlated states $\rho_{AC_1}$ and $\rho_{C_2B}$ respectively (see Definition \ref{def:kas}).
Although the above bound is not tighter than the known one, it holds for a more general class of states: we do not demand the key-attacked states $\hat{\rho}$ to be separable. We note here that for the case when $\hat{\rho}$ is separable, we recover the bound (\ref{eq:mian_CF_bound}) by taking the limit of $\alpha \rightarrow \infty$. In this case, $\tilde{E}_\alpha$ tends to the max-relative entropy \cite{Datta}, which is zero for separable states, while the factor $\frac{\alpha}{\alpha -1}$ goes to $1$. In proving the above bound, we base on the strong converse bound on private key recently shown in \cite{WTB2017,Das2020} (see also \cite{Christandl2017,DBWH20} in this context). In particular, the bound given above implies, for an arbitrary key correlated state
\begin{equation}
R^{C_1C_2\rightarrow A:B}(\rho_{AC_1},\rho_{C_2B})\leq 2E_D^{C_1\rightarrow A}(\rho_{AC_1}) + E_{max}(\hat{\rho}_{AC_1}),
\label{eq:more_relaxed}
\end{equation}
where the $E_{max}$ is the max-relative entropy of entanglement \cite{Datta,Datta2}.

We further show an upper bound on the distillable key of a {\it random} private bits \cite{pptkey,keyhuge}. These states contain ideal keys for one-time pad encryption, secure against a quantum adversary who can hold their purifying system. The importance of this class of states stems from the fact that, as recently shown, they can be used in the so-called {\it hybrid quantum key repeaters} to enhance the security of the Quantum Internet \cite{Sakarya2020}.
The first study on generic private states was done in \cite{Christandl2020}. We take a different randomization procedure than the one utilized there. We base it on the fact that every private bit can be represented by, in general, not normal, operator $X$ \cite{keyhuge}. The latter, in turn, can be represented as $X=U\sigma$ for some unitary transformation and a state $\sigma$. 
We utilize techniques known from standard random matrix theory (see Appendix) to draw a random $X$ and upper bound the mutual information of the latter state, half of which upper bounds the distillable key \cite{squashed}. Based on this technique via the bound of Eq. (\ref{eq:more_relaxed}) we show, that a {\it randomly} chosen private bit $\gamma_{rand}$, satisfies:
\begin{align}
    &R^{C_1C_2\rightarrow A:B}(\gamma_{rand},\gamma_{rand})\leq
    2E_D^{\rightarrow}(\gamma_{rand}) + 1.
    \label{eq:crude}
\end{align}

Recall that a private bit has two distinguished systems. That of the key part, from which the von-Neumann measurement can directly obtain the key in the computational basis, and the system of shield, the role of which is to protect the key part from the environment. In this language, the state $U_0\sigma U_0^{\dagger}$ and $U_1\sigma U_1^{\dagger}$ are the ones appearing on its shield part, given the key of value $0$ (or respectively $1$) is observed on its key part, and are called {\it the conditional states}. Any private bit $\gamma_2$ has distillable key at least equal to $1 + \frac12 K_D(U_0\sigma U_0^{\dagger} \otimes U_1\sigma U_1^{\dagger})$ as it is shown in \cite{HCRS18}. 
From what we prove here, it turns out that the amount of key that can be drawn from the conditional states is bounded by a constant {\it irrespective of the dimension of the shield part}. More precisely, 
we derive a bound on $R^{C_1C_2\rightarrow A:B}(\gamma_{rand},\gamma_{rand})$ that is independent from Eq. (\ref{eq:crude}), by showing that a random private bit $\gamma_{rand}$ satisfies
\begin{equation}
    K_D(\gamma_{rand}) \leq 1 + \frac{1}{4\ln 2}\approx 1.360674.
\end{equation}

\section{Technical preliminaries}
\label{sec:technical}
In this section, we recall certain facts and definitions and fix notation.
A {\it private state} is a state of the form
\begin{align}
    &\gamma_{ABA'B'}=\frac{1}{d_k}\sum_{ij=0}^{d_k-1}|ii\>\<jj|_{AB}\otimes U_i \rho_{A'B'}U_j^{\dagger}\equiv\nonumber\\
    &\tau |\psi_+\>\<\psi_+|_{AB}\otimes \rho_{A'B'} \tau^{\dagger}
    \label{eq:private-state}
\end{align}
where $U_i$ are unitary transformations,
and $\rho_{A'B'}$ is an arbitrary state on system $A'B'$ of dimension $d_s\otimes d_s$. A controlled unitary transformation $\tau := \sum_{i}|ii\>\<ii|\otimes U_{i} + \sum_{i\neq j}|ij\>\<ij|\otimes \id$ where $U_{i}$ are unitary transformations, is called a {\it twisting} and $|\psi_+\>\<\psi_+|=\sum_{i,j=0}^{d_k-1}\frac{1}{d_k}|ii\>\<jj|$. In the case of $d_k=2$, it can be obtained from Eq. (\ref{eq:first_pbit}) by tracing out system $E$. System $AB$ is called the key part, and system $A'B'$ is called the shield. We denote by $\gamma^k$ a private state with $2^k \otimes 2^k$ key part, i.e., containing {\it at least} $k$ key bits. In the case of $k=1$, the private state is called a {\it private bit}, or {\it pbit}, while for $k>1$, it is called a {\it pdit} (with $d_k = 2^k$). A state $\rho$ is an $\epsilon$-approximate private state $\gamma_{d_k}$ if there exists $\epsilon>0$ such that $||\rho - \gamma_{d_k}||_{tr}\leq \epsilon$.

The notion of private states allowed the problem of distillation of secret key described by the so-called Local Operations and Public Communication  \cite{Devetak2005} in a tripartite scenario with Alice Bob and an eavesdropper Eve, to be described as a problem of distillation of private states by Local Operations and Classical Communication (i.e. a composition of local quantum operations and classical communication in both directions) in the worst case scenario where Eve holds a purification of the state shared by Alice and Bob. We recall the obtained definition of the distillable key below.
 \cite{keyhuge,pptkey}
\begin{align}
    K_D(\rho):=\inf_{\epsilon >0}\limsup_{n\rightarrow \infty} \sup_{{\cal P}\in LOCC} 
    \{\frac{k}{n} : {\cal P}(\rho^{\otimes n})\approx_\epsilon \gamma^k \},
\end{align}
where $\rho \approx_\epsilon \sigma$ denotes $||\rho -\sigma||_{tr} \leq \epsilon$, i.e. closeness is the trace-norm distance by $\epsilon$,  with $||X||:= Tr\sqrt{XX^{\dagger}}$ and ${\cal P}$ is a completely positive trace-preserving map from the set of LOCC operations.

 A technical, but important, role in our considerations is played by the so-called {\it key attacked private state}, denoted as $\hat{\gamma}$. This is a private state that got measured on its key part and takes the form
\begin{equation}
    \hat{\gamma}:=\frac{1}{d_k}\sum_{i=0}^{d_k-1}|ii\>\<ii|_{AB}\otimes U_i \rho_{A'B'}U_i^{\dagger}.
\end{equation}
We note here, that $\hat{\gamma}$ is separable iff $U_i\rho U_i^{\dagger}$ are separable for $i\in \{0,\ldots, d_k-1\}$. We aim at generalizing the results of \cite{CF17} to the case when we do not know if a certain key attacked state is separable.

Below, we recall the main definitions and results from \cite{CF17}.
There, the notion of {\it key correlated states} is introduced. This class of states is the one for which the bound for quantum key repeaters is given in \cite{CF17}. 
Denoting by ${\cal Z}_{d_k}$ the Weyl operator of dimension $d_k$ of the form ${\cal Z}_{d_k} = \sum_{l=0}^{d_k-1} e^{i2\pi l/d_k } |l\>\<l| $ we obtain the generalized Bell states as 
$\phi_j := {\cal Z}_{d_k}^j\otimes \id_{B}(\phi)$ where
$\phi:= {1\over \sqrt{d_k}} \sum_i |i\>_A|i\>_B$. Then,
the {\it key correlated } state takes form:
\begin{equation}
       \rho_{ABA'B'}^{key-cor}\coloneqq \sum_{\mu,\nu} |\phi_\mu\>\<\phi_\nu|_{AB}\otimes M^{(\mu,\nu)}_{A'B'},
    \label{eq:key_correlated}
\end{equation}
where $M^{(\mu,\nu)}$ are $d_s\times d_s$ matrices on $A'B'$.

The notion of the key attacked state of a private state naturally generalizes to the case of a key attacked state of the key correlated result. It can be obtained as follows:
\begin{definition}
\label{def:kas}
The key attacked state $\hat{\rho}_{ABA'B'}$ of a key correlated state $\rho^{key-cor}_{ABA'B'}$ defined in equation \eqref{eq:key_correlated} is the state
obtained from $\rho^{key-cor}_{ABA'B'}$ by applying local von-Neumann measurements on systems $A$ and $B$ in the computational basis.
It takes the following form: 
\begin{equation}
\hat{\rho}_{ABA'B'}\coloneqq\sum_{i} p_i |ii\>\<ii|_{AB}\otimes \sigma_{A'B'}^{(i)}
\end{equation}
where $\sigma_{A'B'}^{(i)}$ are some states on the $A'B'$ system such that $\sum_i p_i \sigma_{A'B'}^{(i)}=Tr_{AB}\rho_{ABA'B'}$ and $\{p_i\}$ forms a probability distribution.
\end{definition}

One of the main results of \cite{CF17} connects the problem of distinguishability of a key correlated state from its key attacked version.
It states that the one-way distillable entanglement of the key correlated state $\rho$ is quantified by means of the so-called locally measured-relative entropy "distance" between the $\rho$ and its key attacked version when the latter state is separable. By locally measured relative entropy distance between two states we mean the relative entropy of two states measured both by a quantum completely positive, trace preserving map ${\cal M}$ (not necessarily a POVM as it was defined in \cite{Piani}) on Alice's side:
\begin{align}
    &\sup_{{\cal M} \in LO_A} D({\cal M}(\rho)||{\cal M}(\sigma)):=\nonumber\\
    &\sup_{{\cal M}}D(
    ({\cal M}\otimes \mathrm{I}_B) \rho ||({\cal M}\otimes \mathrm{I}_B) \sigma)
\end{align}
where $D(\rho||\sigma)= \mathrm{Tr}\rho\log\rho - {\mathrm Tr}\rho\log \sigma$.
In what follows, we need a regularized
version of it, which reads:
\begin{align}
    &\sup_{{\cal M}\in LO_A} D^{\infty}(\rho||\sigma):=\nonumber\\&\lim_{n\rightarrow \infty} \frac1n \sup_{{\cal M}\in LO_A}D({\cal M}(\rho^{\otimes n})||{\cal M}(\sigma^{\otimes n})).
    \label{eq:reg_mes_relent}
\end{align}
where in the above ${\cal M}$ acts naturally on all the $n$ subystems $A$ of $\rho$ (and $\sigma$ respectively).

In what follows, we will need to recall the definition of one-way distillable entanglement. It is the maximal number of approximate singlet states that can be produced from $n$ copies of the input state $\rho$ via LOCC operations that use only communication from $A$ to $B$ (denoted as $LOCC(A\rightarrow B)$), in the asymptotic limit of large $n$ and vanishing error of approximation.
Formally it reads:
\begin{align}
&E_D^{A\rightarrow B}(\rho):=\inf_{\delta >0}\limsup_{n\rightarrow \infty}\sup_{\Lambda \in LOCC(A\rightarrow B)}\nonumber\\ &\{E:\Lambda(\rho^{\otimes n})\approx_{\delta} |\psi_+\>\<\psi_+|^{\otimes nE}\}
\end{align}

We are ready to invoke the above-mentioned result of \cite{CF17}.
\begin{theorem}[cf. Theorem 2 in \cite{CF17}]
For any key correlated state $\rho$, and its key attacked state $\hat{\sigma}$, it holds 
\begin{align}
    &E_D^{\rightarrow}(\rho)\geq D_A(\rho||\hat{\rho}):= \sup_{{\cal M}\in LO_A} D({\cal M}(\rho)||{\cal M}(\hat{\rho})),\nonumber\\
&E_D^{\rightarrow}(\rho)\geq D_A^{\infty}(\rho||\hat{\rho}):= \sup_{{\cal M}\in LO_A} D({\cal M}(\rho^{\otimes n})||{\cal M}(\hat{\rho}^{\otimes n})).
\end{align}
If $\hat{\rho}$ is separable then:
\begin{equation}
    E_D^{\rightarrow}(\rho)=D^{\infty}_A(\rho||\hat{\rho}).
\end{equation}
\label{thm:CF_dist}
\end{theorem}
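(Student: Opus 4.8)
The plan is to prove the stated lower bounds by an achievability (distillation) argument and to recover the equality in the separable case by adding a matching converse. Throughout write $S(\omega):=-\Tr(\omega\log\omega)$, and recall that $\hat\rho$ arises from $\rho$ by the local pinching $\Lambda$ that dephases the key subsystems of $A$ and $B$ in the computational basis, so that $\hat\rho=\Lambda(\rho)$ and $\Lambda^2=\Lambda$.

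\emph{Lower bound.} First I would fix an Alice-local measurement ${\cal M}\in LO_A$ and show that $D({\cal M}(\rho)\|{\cal M}(\hat\rho))$ is an achievable one-way (Alice-to-Bob) entanglement rate; taking the supremum over ${\cal M}$ then yields $E_D^{\rightarrow}(\rho)\ge D_A(\rho\|\hat\rho)$. The mechanism rests on the pinching structure of $\hat\rho$: because $\Lambda$ acts on the key registers only, for measurements that also incorporate the dephasing of Alice's own key (which is no loss of generality, since measuring that register is never harmful here) the image ${\cal M}(\hat\rho)$ is precisely the residual key-dephasing of ${\cal M}(\rho)$ on Bob's side, so the relative entropy collapses to an entropy difference,
\begin{equation}
D({\cal M}(\rho)\|{\cal M}(\hat\rho))=S\big({\cal M}(\hat\rho)\big)-S\big({\cal M}(\rho)\big).
\end{equation}
I would then implement ${\cal M}$ \emph{coherently}, as a Stinespring isometry feeding an ancilla retained by Alice, and forward the classical label to Bob; the key-correlated form guarantees that this turns the entropy difference above into a coherent information $I_{\mathrm{coh}}$ of the post-processed state. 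By the Devetak--Winter hashing bound this coherent information is a one-way distillation rate, hence $E_D^{\rightarrow}(\rho)$ dominates it. Applying the same single-copy argument to $\rho^{\otimes n}$ lower bounds the $n$-copy yield; dividing by $n$ and letting $n\to\infty$ upgrades the bound to $E_D^{\rightarrow}(\rho)\ge D_A^{\infty}(\rho\|\hat\rho)$.

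\emph{Equality for separable $\hat\rho$.} Given the lower bound, it suffices to supply the converse $E_D^{\rightarrow}(\rho)\le D_A^{\infty}(\rho\|\hat\rho)$. For this I would invoke that the regularized Alice-measured relative entropy of entanglement,
\begin{equation}
E_{R,A}^{\infty}(\rho):=\lim_{n\to\infty}\frac1n\sup_{{\cal M}\in LO_A}\inf_{\sigma\in\SEP}D\big({\cal M}(\rho^{\otimes n})\|{\cal M}(\sigma)\big),
\end{equation}
upper bounds the one-way distillable entanglement, with the Alice-side measurement encoding exactly the one-way restriction of the admissible protocols. Since $\hat\rho\in\SEP$ by hypothesis, each $\hat\rho^{\otimes n}$ is a feasible point of the infimum, so $E_{R,A}^{\infty}(\rho)\le D_A^{\infty}(\rho\|\hat\rho)$. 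Chaining the two inequalities gives $E_D^{\rightarrow}(\rho)=D_A^{\infty}(\rho\|\hat\rho)$, and as a by-product shows that $\hat\rho$ is then the optimal separable witness.

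\emph{Main obstacle.} I expect the crux to be the achievability, and within it the coherent-measurement step: one must verify that realizing Alice's (generically entanglement-reducing) measurement coherently, together with forwarding its outcome, stays inside one-way LOCC while producing distillable entanglement at exactly the rate $S({\cal M}(\hat\rho))-S({\cal M}(\rho))$. This is where the key-correlated form is indispensable, since it is precisely this structure that ties Bob's key-dephasing to the reference system of a coherent information; for a generic state the entropy difference above need not be an achievable rate, which is why the statement is special to this class. The converse, by contrast, is essentially immediate once the measured relative entropy of entanglement is available as a one-way upper bound and the separability of $\hat\rho$ is used to select the witness.
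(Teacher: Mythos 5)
First, note that the manuscript itself contains no proof of this statement: it is recalled as background (``cf.\ Theorem 2 in \cite{CF17}'') and used as a black box, so your proposal can only be judged against the original argument of Christandl and Ferrara. Judged that way, your converse is fine in outline, but your achievability argument contains a genuine error: the restriction to measurements that ``incorporate the dephasing of Alice's own key,'' which you declare to be no loss of generality, in fact trivializes the quantity. A key-correlated state is supported on the maximally correlated subspace of the key registers, $\rho=\sum_{l,l'}|ll\rangle\langle l'l'|_{AB}\otimes C_{ll'}$, and therefore its key-attacked state satisfies $\hat{\rho}=\Delta_A(\rho)=\Delta_B(\rho)=\Delta_{AB}(\rho)$, where $\Delta_X$ denotes computational-basis dephasing of the key register of $X$. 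Consequently every $\mathcal{M}\in LO_A$ with $\mathcal{M}\circ\Delta_A=\mathcal{M}$ obeys $\mathcal{M}(\hat{\rho})=\mathcal{M}(\Delta_A(\rho))=\mathcal{M}(\rho)$, so $D(\mathcal{M}(\rho)\|\mathcal{M}(\hat{\rho}))=0$ on your entire restricted class: measuring Alice's key register first is not ``never harmful'' but maximally harmful, because $\hat{\rho}$ is by definition the post-measurement version of $\rho$, and all of the distinguishability lives precisely in the coherences that this measurement destroys. Your argument therefore establishes only the trivial bound $E_D^{\rightarrow}(\rho)\geq 0$. A concrete counterexample to the WLOG claim: take $\rho=|\phi\rangle\langle\phi|$ maximally entangled with trivial shield (a key-correlated state); then $D_A(\rho\|\hat{\rho})\geq\log_2 d_k$, witnessed already by $\mathcal{M}=\mathrm{id}$ (admissible under this paper's definition of $LO_A$ as arbitrary local CPTP maps), while your restricted supremum vanishes.

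The entropy-difference identity you were aiming for does hold, but for a different reason and for all measurements: since $\hat{\rho}=\Delta_B(\rho)$ and $\Delta_B$ commutes with every Alice-local map, $\mathcal{M}(\hat{\rho})=\Delta_B(\mathcal{M}(\rho))$ is a pinching of $\mathcal{M}(\rho)$ for \emph{every} $\mathcal{M}\in LO_A$, whence $D(\mathcal{M}(\rho)\|\mathcal{M}(\hat{\rho}))=S(\Delta_B(\mathcal{M}(\rho)))-S(\mathcal{M}(\rho))$ with no restriction needed. But observe that the dephasing now sits on Bob's key register, so this difference is a coherent information across the cut separating Bob's dephasing ancilla from everything else, not across $A\!:\!B$; converting it into a rate achievable by a one-way $A\rightarrow B$ protocol (via a coherent implementation of $\mathcal{M}$ and Devetak--Winter hashing) is exactly the nontrivial content of the proof in \cite{CF17}, and your proposal asserts this step rather than performs it---you flag it yourself as the expected crux. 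By contrast, the equality for separable $\hat{\rho}$ is acceptable as sketched: the Alice instrument of an optimal one-way protocol is itself an admissible element of $LO_A$, Bob's conditional operations are removed by data processing, and $\hat{\rho}^{\otimes n}$ serves as the separable witness, yielding $E_D^{\rightarrow}(\rho)\leq D_A^{\infty}(\rho\|\hat{\rho})$.
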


A specific variant of the key repeater rate plays a central role in our considerations. 
The {\it one-way key repeater rate}
is defined in \cite{CF17}, as  follows, where the protocols consist of Local Operations and Classical Communication LOCC operations  that are a composition of  (i) local quantum operation on system $C_1C_2\equiv C $ followed by one-way  classical communication, i.e., the one  from $C_1C_2\equiv {\bf C}$ to
$AB$ followed by (ii) an arbitrary LOCC operation between $A$ and  $B$. We denote this composition as $LOCC(C_1C_2 \rightarrow A \leftrightarrow B)$.
\begin{definition} (cf. Appendix G in \cite{CF17})
Consider any two bipartite states $\rho_{AC_1}$ of systems $AC_1$ and $\rho'_{C_2B}$ of systems $C_2B$ (not necessarily equal), called further input states. A protocol ${\cal P}$ consisting of $LOCC(C_1C_2\rightarrow A\leftrightarrow B)$ operations which transform $n$ copies of the input states into $\epsilon$ approximation (in trace norm distance) of a private state $\gamma^k$ shared between $A$ and $B$ has the key rate $\frac{k}{n}$. The maximal value of the key rates over such protocols in the asymptotic limit of large $n$ and vanishing error $\epsilon$ is called a one-way key repeater rate of $\rho_{AC_1}\otimes\rho_{C_2B}'$, and reads
\begin{align}
    &R^{C_1C_2\rightarrow AB}(\rho_{AC_1},\rho'_{C_2B})\coloneqq\lim_{\epsilon \rightarrow 0}\lim_{n\rightarrow\infty} \sup_{{\cal P}\in LOCC(C_1C_2\rightarrow A\leftrightarrow B)}\nonumber\\
    &\{ \frac{k}{n}: {\cal P}((\rho_{AC_1}\otimes \rho_{C_2B}')^{\otimes n})\approx_\epsilon \gamma_{AB}^k\}.
\end{align}
\end{definition}

Following \cite{CF17}, we will use the locally measured on system ${\bf C}\equiv C_1 C_2$, regularized relative entropy applied to the scenario of key repeaters:
\begin{align}
    &D_{\bf{C}}^{\infty}(\rho_{AC_1}\otimes \rho'_{C_2B}||\rho_{AC_1C_2B}'')\coloneqq \lim_{n \rightarrow \infty} {1\over n}\nonumber \\& 
    \sup_{{\cal M}_{\bf C}\in LO_{C_1C_2}} D({\cal M}_{\bf C} ({\rho_{AC_1}\otimes
    \rho'_{C_2B}}^{\otimes n})||{\cal M}_{\bf C}( {\rho''_{AC_1C_2B}}^{\otimes n})),
\end{align}
which is nothing but the locally measured relative entropy with $A$ in Eq. (\ref{eq:reg_mes_relent}) identified with
the system ${\bf C}=C_1C_2$ and $B$ with $AB$ respectively, for
the input state $\rho_{AC_1}\otimes\rho_{C_2B}$. Let us emphasize here, that ${\cal M}_{\bf C}$ acts locally on all $n$ subsystems $C_1C_2$ of  the state $(\rho_{AC_1}\otimes \rho_{C_2B})^{\otimes n}$ and  analogously on all $n$ subsystems $C_1C_2$ of the state ${\rho''_{AC_1C_2B}}^{\otimes n}$.

As it was further noted in \cite{CF17}, there holds a bound obtained from the Theorem 4 of \cite{BCHW15}:
\begin{align}
    R^{C_1C_2\rightarrow AB}(\rho_{AC_1},\rho'_{C_2B})\leq D_{\bf{C}}^{\infty}(\rho_{AC_1}\otimes \rho'_{C_2B}||\sigma)
\end{align}
for any $\sigma$ separable in $AC_1C_2:B$ or $A:C_1C_2B$ cut. 
Then for $\rho,\rho'$ being key correlated states, and by choosing $\sigma = \hat{\rho}\otimes \hat{\rho'}$ (i.e. assuming that the key attacked the correlated state of at least one of the two states is separable) in \cite{CF17} they have obtained from Theorem \ref{thm:CF_dist} invoked above 
\begin{align}
    R^{C_1C_2\rightarrow AB}(\rho_{AC_1},\rho'_{C_2B})\leq E_D^{C_1C_2\rightarrow AB}(\rho_{AC_1}\otimes \rho'_{C_2B})
\end{align}

The main contribution of this manuscript amounts to obtaining a bound for which neither of the states $\hat{\rho}$ and $\hat{\rho'}$ need to be separable.

Further, the following quantities are relevant for the proof technique. We will need the notion of {\it $\epsilon$-hypothesis-testing divergence} \cite{Buscemi,LR}, which is defined as
\begin{equation}
    D^{\epsilon}_h(\rho||\sigma):=-\log_2 \inf_{{\mathrm P}: 0\leq {\mathrm P}\leq \id} \{\Tr[{\mathrm P} \sigma]: \Tr[{\mathrm P} \rho]\geq 1-\epsilon\},
    \label{eq:d_epsilon}
\end{equation}
and the {\it sandwiched R\'enyi relative entropy} \cite{WWY14,MDSFT}:
\begin{equation}
    \tilde{D}_\alpha(\rho||\sigma):={1\over\alpha -1}\log_2\Tr[\left(\sigma^{1-\alpha \over 2\alpha}\rho\sigma^{1-\alpha \over 2\alpha}\right)^\alpha].
    \label{eq:Renyi}
\end{equation}
In \cite{Hayashi2017}, based on the above quantity, the {\it sandwiched R\'enyi relative entropy of entanglement} is defined
as follows:
\begin{equation}
    {\tilde E}_{\alpha}(\rho):= \inf_{\sigma \in SEP} \tilde{D}_\alpha(\rho||\sigma),
\end{equation}
where $SEP$ denotes the set of separable states \cite{4H}.
As it is shown in \cite{Christandl2017} (and independently in \cite{Zhu2017}), for $\alpha \rightarrow \infty$ the above quantity takes as
a limit the {\it max relative entropy of entanglement} \cite{Datta,Datta2}, where the latter quantity is defined as:
\begin{equation}
    E_{max}(\rho) = \inf_{\sigma \in SEP} \inf\{\lambda \in{\mathrm R}: \rho \leq 2^{\lambda}\sigma\}.
\end{equation}

As the second part of this work heavily relies on {\it random} quantum objects, we provide a brief introduction to random matrices in Appendix~\ref{app:random}.

Finally, we would like to recall the results of \cite{YHW19} on {\it private randomness} distillation, as our result also applies to this resource. There, the notion of {\it independent states} was introduced. These states take the form of 
twisted coherence, just like the private state is twisted entanglement. 
\begin{definition} The independent state is defined as follows
\begin{align}
& \alpha_{d_A,d_B} :=\\
& \qquad {1\over d_Ad_B} \sum_{i,j=0}^{d_A-1}\sum_{k,l=0}^{d_B-1} |i\>\<j|_A\otimes |k\>\<l|_B\otimes U_{ik}\sigma_{A'B'}U_{jl}^{\dagger} \equiv \nonumber\\
& \qquad \qquad \tau |+_{d_A}\>\<+_{d_B}|_A\otimes |+_{d_B}\>\<+_{d_B}|_B\otimes \sigma_{A'B'} \tau^\dagger\nonumber
\end{align}
where the unitary transformation $\tau =\sum_{ij}|ij\>\<ij|\otimes U_{ij}$ is the 
twisting, $\sigma_{A'B'}$ is an arbitrary
state on $A'B'$ system and $|+_d\> = 1/\sqrt{d}\sum_{i=0}^{d-1}|i\>$.
\end{definition}
The independent state has the property that, when measured on the system $AB$, yields
$\log_2 d_A$ bits of ideally private randomness
for Alice and $\log_2 d_B$ for Bob. 
The randomness is ideally private since the construction of the state assures that the outcomes of measurements on the key part system $AB$ are decorrelated from the purifying system $E$.
We also recall the scenario of distributed randomness distillation of \cite{YHW19}.
In the latter scenario, two parties are trying to distill locally private randomness, which is independent for each party and decoupled from the environment's system. The honest parties share a dephasing channel via which they can communicate ({\it free communication}). One can also consider the case in which the honest parties have local access to an unlimited number of maximally mixed states ({\it free noise}). The parties, $A$ and $B$, distill private randomness at rates $R_A$ and $R_B$, respectively, in the form of independent states. We recall below the main results of \cite{YHW19}:

{\theorem [Cf. \cite{YHW19}]
\label{thm:priv_rand} 
The achievable rate regions of $\rho_{AB}$ are:
\begin{enumerate}
\item for no communication and no noise, 
$  R_A       \leq \log|A| - S(A|B)_+ $, 
$  R_B       \leq \log|B| - S(B|A)_+ $, and
 $ R_A + R_B \leq R_G$,
where $[t]_+=\max\{0,t\}$;

\item for free noise but no communication, 
$R_A       \leq \log|A| - S(A|B)$, 
$R_B       \leq \log|B| - S(B|A)$, and 
$R_A + R_B \leq R_G$;

\item for free noise and free communication, 
$R_A\leq R_G$, $R_B\leq R_G$, and $R_A+R_B\leq R_G$;

\item for free communication but no noise, 
$R_A       \leq \log|AB| - \max\{S(B),S(AB)\}$, 
$R_B       \leq \log|AB| - \max\{S(A),S(AB)\}$, and
$R_A + R_B \leq R_G$,
\end{enumerate}
where $R_G$ denotes {\it global purity} and equals $\log|AB| - S(AB)$}.
Further, the rate regions in settings 1), 2), 3) are tight.

In the above $S(A|B)_+=\max \{0,S(A|B)\}$. Here $S(X)_{\rho_{XY}}$ is the von-Neumann entropy of subsystem $X$ of the state $\rho_{XY}$, $S(X|Y)_{\rho_{XY}}$ is the conditional von-Nuemann entropy while 
\begin{equation}
    I(X:Y)_{\rho_{XY}} \coloneqq S(X)_{\rho_{XY}} + S(Y)_{\rho_{XY}} - S(XY)_{\rho_{XY}}
\end{equation}
is the quantum mutual information. We will sometimes neglect the subscript $XY$ in 
the above if it is understood from the context.
Recently, it has been shown that private states are independent states \cite{HKSS20}. However, naturally, the set of independent states is strictly larger. In particular, a {\it local independent} bit is not necessarily a private state:
\begin{equation}
    \alpha_{d_A}:=\tau |+\>\<+|_A\otimes \sigma_{A'B'}\tau^{\dagger},
    \end{equation}
where $|+\>={1\over \sqrt{d}_A} \sum_{i=0}^{d_A} |i\>$.
In what follows, we will estimate the private randomness content of a generic local independent bit.

\section{Relaxed bound on one-way private key repeaters}
\label{sec:main}

In this section, we provide a relaxed bound on one-way private key repeaters. It holds for the so-called key correlated states $\rho_{key}$ \cite{CF17}.
The bound takes the form of the regularized, measured relative entropy of entanglement from {\it any} state scaled by the factor $\alpha \over \alpha -1$ increased by the $\alpha$-sandwiched relative entropy of entanglement $E_\alpha$ of the key-attacked version of $\rho_{key}$. 
Although the latter bound can not be tighter than the bound by relative entropy of entanglement \cite{pptkey,keyhuge}, it has
appealing form, as it, in a sense, computes the distance from a separable state via a 'proxy' state, which can be arbitrary.

As the main technical contribution, we will first present the lemma, which upper bounds the fidelity with a singlet of a state which up to its inner structure (being outcome of certain protocol) is {\it arbitrary}. Such a bound was known so far for the fidelity of a singlet with the so-called {\it twisted separable states} (see Lemma 7 of \cite{keyhuge}) and proved useful in providing upper bounds on distillable key (see \cite{KhatriWildeBook} and references therein). Our relaxation is based on the strong-converse bound for the private key, formulated in terms of the $\tilde{E}_{\alpha}$.

We begin by explaining the idea of a technical lemma, which we present below. It is known that any $d\otimes d$ separable state is bounded away as $1\over d$ from a maximally entangled state of local dimension $d$ in terms of fidelity \cite{keyhuge}. However, an arbitrary state can not be 
bounded arbitrarily away from the singlet state in terms of fidelity.
There must be a penalty term that quantifies how rapidly this fidelity increases as the state becomes increasingly entangled, i.e., far and far from being separable. Such a term
follows from the strong converse bound on the distillable key. Indeed, this is the essence of $\tilde{E}_\alpha$ being the strong converse bound: If one tries to distill more key than the $\tilde{E}_\alpha$ from a state, then
the fidelity with a singlet of the output state increases {\it exponentially
fast} to $1$, in terms of $\tilde{E}_\alpha$. In our case, the state compared with the singlet in terms of fidelity will be the state after the performance of the key repeater protocol, subjected to twisting and partial trace. Namely, a state $\sigma^{\otimes n}$ of systems $ABC$ after the action of a map $\Lambda\in LOCC({\bf C}\rightarrow (A\leftrightarrow B))$ which can be now of systems $ABA'B'C$, traced over $C$ and further rotated by a twisting unitary transformation $\tau$ and traced over system $A'B'$. The last two actions, i.e., rotation and twisting, serve only as a mathematical tool to check whether the state under consideration, here the protocol's output, is close to a private state. Indeed, a private state, after performing the inverse operation to twisting and the partial trace, becomes the singlet state \cite{pptkey}. We note here that the proof technique in the Lemma (\ref{lem:tr}) below bears a partial analogy to the proof of the fact that the so called second type distillable entanglement is upper bounded by the relative entropy of entanglement as shown in \cite{Vedral1998} and described in Theorem $8.7$ of \cite{Hayashi2017}.

\begin{lemma}
\label{lem:tr}
For a bipartite state of the form
$\tilde{\sigma} := \Tr_{A'B'}\tau\Tr_C\Lambda(\sigma^{\otimes n})\tau^{\dagger}$, where $\sigma_{ABC}$ is arbitrary state, $\tau$ is an arbitrary twisting, and $\Lambda \in LOCC({\bf C}\rightarrow (A\leftrightarrow B))$ there is
\begin{equation}
    \Tr \tilde{\sigma} \psi_+^{nR} \leq 2^{-n(\frac{\alpha -1}{\alpha})[R-1/n\tilde{E}_\alpha(\sigma^{\otimes n})]},
\end{equation}
where $R >0$ is a real number, and $\psi_+$ denotes the maximally entangled state, and $\alpha \in (1,\infty)$.
\end{lemma}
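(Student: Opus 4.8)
The plan is to treat the rank-one projector $\psi_+^{nR}$ onto the $2^{nR}$-dimensional maximally entangled state as a binary test, and to compare the protocol output $\tilde\sigma$ not against a separable state directly, but against the image under the \emph{same} pipeline of the separable state that is optimal for $\tilde E_\alpha(\sigma^{\otimes n})$. Writing the pipeline as the CPTP map $\mathcal{N}(\cdot) := \Tr_{A'B'}\tau\,\Tr_C\Lambda(\cdot)\,\tau^{\dagger}$, so that $\tilde\sigma = \mathcal{N}(\sigma^{\otimes n})$, I would fix a separable state $\omega$ with $\tilde D_\alpha(\sigma^{\otimes n}\|\omega) \le \tilde E_\alpha(\sigma^{\otimes n}) + \delta$ and set $\tilde\omega := \mathcal{N}(\omega)$. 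Because every constituent of $\mathcal{N}$ (an LOCC map, a partial trace, conjugation by the unitary $\tau$, a further partial trace) is completely positive and trace preserving, the data-processing inequality for the sandwiched R\'enyi divergence --- valid precisely for $\alpha>1$, which is why the lemma restricts to this range --- yields $\tilde D_\alpha(\tilde\sigma\|\tilde\omega) \le \tilde D_\alpha(\sigma^{\otimes n}\|\omega) \le \tilde E_\alpha(\sigma^{\otimes n}) + \delta$.

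Next I would convert this divergence bound into an overlap bound. Applying data processing once more, now to the two-outcome measurement channel associated with $\{\psi_+^{nR},\, \id - \psi_+^{nR}\}$, gives $\tilde D_\alpha(\tilde\sigma\|\tilde\omega) \ge \tfrac{1}{\alpha-1}\log_2\!\big(p^\alpha q^{1-\alpha} + (1-p)^\alpha(1-q)^{1-\alpha}\big)$ with $p := \Tr[\psi_+^{nR}\tilde\sigma]$ and $q := \Tr[\psi_+^{nR}\tilde\omega]$. Discarding the second (nonnegative) term and using $\alpha>1$ to fix the inequality direction, this rearranges to
\begin{equation}
\Tr[\psi_+^{nR}\tilde\sigma] \le 2^{\frac{\alpha-1}{\alpha}\left(\tilde D_\alpha(\tilde\sigma\|\tilde\omega) + \log_2 \Tr[\psi_+^{nR}\tilde\omega]\right)} .
\end{equation}
It then remains to show that the reference overlap obeys $\Tr[\psi_+^{nR}\tilde\omega] \le 2^{-nR}$, whereupon substituting the divergence bound and letting $\delta\to0$ produces exactly $2^{-n\frac{\alpha-1}{\alpha}[R - \frac1n\tilde E_\alpha(\sigma^{\otimes n})]}$.

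The overlap bound on the reference state is the heart of the argument. I would take the optimizing $\omega$ separable across a cut that groups $C$ with one of $A$ or $B$ --- the same two cuts used for the reference states in \cite{CF17}. Since $\Lambda \in \LOCC({\bf C}\rightarrow (A\leftrightarrow B))$ consists of a measurement of $C$ broadcast to $A$ and $B$ followed by LOCC between $A$ and $B$, it is an LOCC map with respect to that same bipartite cut and hence preserves its separability; tracing out $C$ then leaves $\Tr_C\Lambda(\omega)$ separable across $AA':BB'$. At this point the state feeding into the twisting is a genuine \emph{twisted separable} state, and I would invoke the fidelity bound of Lemma~7 of \cite{keyhuge}: for any state separable in the $AA':BB'$ cut, its image under a twisting followed by tracing out the shield $A'B'$ has overlap at most $1/d$ with the $d$-dimensional maximally entangled state. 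With $d = 2^{nR}$ this gives $\Tr[\psi_+^{nR}\tilde\omega]\le 2^{-nR}$, closing the chain.

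The step I expect to be the main obstacle is exactly this last one, because the twisting $\tau$ is a \emph{global} unitary on $ABA'B'$ that generically creates entanglement across $A:B$, so it is not a priori clear that a small singlet overlap survives it. The resolution is that the overlap with $\psi_+$ after twisting-and-tracing is controlled not by the $A:B$ entanglement of $\tau\nu\tau^{\dagger}$ but by the separability of the \emph{pre}-twisted state $\nu$, which is precisely what the twisted-separable-state bound of \cite{keyhuge} quantifies; checking that the particular cut preserved by $\Lambda$ matches the cut required by that bound is the one place where the special structure of key-repeater protocols, rather than generic CPTP reasoning, is needed. Beyond that, the only care required is the bookkeeping of the ancilla systems created by $\Lambda$ on Alice's and Bob's sides (absorbed into $A'$ and $B'$) and the standard check that the infimum defining $\tilde E_\alpha$ may be approached to within $\delta$.
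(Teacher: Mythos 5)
Your proposal is correct and follows essentially the same route as the paper's proof: compare $\tilde{\sigma}$ against the image of a separable state (in the $A\mathbf{C}:B$ cut) under the same pipeline, invoke Lemma~7 of \cite{keyhuge} for the singlet overlap of the resulting twisted separable state, and use data processing of $\tilde{D}_\alpha$ to pull the divergence back to $\tilde{E}_\alpha(\sigma^{\otimes n})$. The only cosmetic difference is that you inline the measurement data-processing argument (the two-outcome test $\{\psi_+^{nR},\id-\psi_+^{nR}\}$) where the paper instead routes the bound through the hypothesis-testing divergence $D_h^{\epsilon}$ and cites the known inequality $D_h^{\epsilon}\leq \tilde{D}_\alpha + \frac{\alpha}{\alpha-1}\log_2\frac{1}{1-\epsilon}$ from \cite{DBWH20}.
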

\begin{proof}
In what follows, we directly use the proof of the strong-converse bound for the distillable key of \cite{DBWH20}, which is based on \cite{WTB2017,Das2020,Christandl2017}.
Let us suppose $\Tr \tilde{\sigma} \psi_+^{nR}= 1-\epsilon$. Let us also choose 
a state $\hat{\sigma}$, to be a twisted separable state of the form $\hat{\sigma} = \Tr_{A'B'}\tau\Tr_C\Lambda(\sigma_{A{\bf C}:B}')\tau^{\dagger}$ with $\sigma'$ being arbitrary separable state in $(A{\bf C}:B)$ cut. Note here that by definition $\Lambda$ preserves separability in $A{\bf C}:B$ cut, and
hence in $A:B$ as well.

Since any such state has an overlap
with the singlet state $\psi^{nR}_+$ less
than $\frac{1}{2^{nR}}\equiv 1/K$ (see lemma 7 of \cite{keyhuge}), we have by taking ${\mathrm P}=|\psi_+^{nR}\>\<\psi_+^{nR}|$ definition of $D^{\epsilon}_h$ given in Eq. (\ref{eq:d_epsilon}) a bound:
\begin{equation}
    \log_2 K \leq D^{\epsilon}_h(\tilde{\sigma}||\hat{\sigma}).
\end{equation}
We then upper bound $D^{\epsilon}_h$ by $\tilde{D}_\alpha$ as follows, for every $\alpha \in (1,\infty)$ (see A8 in the Appendix of \cite{DBWH20}):
\begin{align}
    &\log_2 K \leq \tilde{D}_\alpha(\tilde{\sigma}||\hat{\sigma}) +   \frac{\alpha}{\alpha -1} \log_2(\frac{1}{1-\epsilon})  =\nonumber \\&
    \tilde{D}_\alpha(\Tr_{A'B'}\tau\Tr_C\Lambda(\sigma^{\otimes n})\tau^{\dagger} ||\Tr_{A'B'}\tau \Tr_C\Lambda(\sigma'_{A{\bf C}B} )\tau^{\dagger}) +\nonumber \\& \frac{\alpha}{\alpha -1} \log_2(\frac{1}{1-\epsilon}).
\end{align}
We can now relax $\tilde{\sigma}$ to $\Lambda(\sigma^{\otimes n})$ due to 
monotonicity of the sandwich R\'enyi relative entropy distance under a jointly applied channel, so that
\begin{equation}
    \log_2 K \leq \tilde{D}_\alpha(\Lambda (\sigma^{\otimes n})||\Lambda(\sigma')) +   \frac{\alpha}{\alpha -1} \log_2(\frac{1}{1-\epsilon}).
\end{equation}
We can also further drop the operation $\Lambda$ again using monotonicity of the relative R\'enyi divergence $\tilde{D}_\alpha$. We thus arrive at
\begin{equation}
    \log_2 K \leq \tilde{D}_\alpha(\sigma^{\otimes n}||\sigma') +   \frac{\alpha}{\alpha -1} \log_2(\frac{1}{1-\epsilon}).
\end{equation}
Since $\sigma'$ is an arbitrary separable state, we can also take the infimum over this set, obtaining:

\begin{equation}
    \log_2 K\leq \tilde{E}_\alpha(\sigma^{\otimes n}) +   \frac{\alpha}{\alpha -1} \log_2(\frac{1}{1-\epsilon}).
\end{equation}

It suffices to rewrite it as follows:
\begin{equation}
1 - \epsilon \leq 2^{-n(\frac{\alpha -1}{\alpha}) (R - \frac1n \tilde{E}_\alpha (\sigma^{\otimes n}))}. \end{equation}
Since $1-\epsilon = \Tr \tilde{\sigma} \psi_+^{nR}$, the assertion follows.
\end{proof}
Owing to the above lemma, we can formulate a bound on the relative entropy distance from the set of states constructed from an arbitrary state $\rho$ by admixing a body of separable states.
\begin{equation}
    \mbox{Cone}(SEP,\rho):=\{\tilde{\sigma} |\exists_{p \in [0,1]},\, \tilde{\sigma} = p\rho +(1-p) \sigma,\,\sigma \in SEP\}
\end{equation}
We denote this set as $\mbox{Cone}(SEP,\rho)$ and the relative entropy distance from it by $E_R^{\mbox{Cone}(\rho)}$ or $E_R^{\mbox{Cone}}$ when $\rho$ is known from the context.
\begin{lemma}
\label{lem:er}
Let $\sigma_{AA'BB'C}$ be an arbitrary quantum state. For a maximally entangled state $\psi_+^{Rn}$, a state $\tilde{\sigma}_n := \Tr_{A'B'}\tau\Tr_C \Lambda (\sigma^{\otimes n})\tau^{\dagger}$  where $\tau$ is a twisting, and every $\alpha \in (1,\infty)$, there is
\begin{align}
&E_R^{\mbox{Cone}(\tilde{\sigma}_n)}(\psi_+^{Rn})\equiv\inf_{\sigma_n\in \mbox{Cone}(SEP,\tilde{\sigma}_n)}D(\psi_+^{Rn}||\sigma_n) \geq\nonumber\\& 
n(\frac{\alpha -1}{\alpha}) (R - \frac1n \tilde{E}_\alpha (\sigma^{\otimes n})).
\end{align}
\end{lemma}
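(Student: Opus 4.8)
The plan is to reduce the relative-entropy distance from $\mbox{Cone}(SEP,\tilde{\sigma}_n)$ to a statement about the \emph{overlap} of the singlet with states in that cone, and then to feed in the fidelity bound already established in Lemma~\ref{lem:tr}. Throughout write $\psi := \psi_+^{Rn}$ for the target singlet, whose local dimension is $K := 2^{nR}$, and set $f := \Tr[\tilde{\sigma}_n\,\psi]$ for the overlap of the protocol's (twisted, partially traced) output with it. The target exponent I am aiming at is $T := n\tfrac{\alpha-1}{\alpha}\big(R-\tfrac1n\tilde{E}_\alpha(\sigma^{\otimes n})\big)$.

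First I would exploit that $\psi$ is pure. For any $\sigma_n$ with spectral decomposition $\sigma_n = \sum_i\lambda_i\proj{e_i}$, concavity of the logarithm (Jensen's inequality for the probability distribution $q_i = |\langle\psi|e_i\rangle|^2$) gives
\begin{equation}
D(\psi || \sigma_n) = -\langle\psi|\log_2\sigma_n|\psi\rangle \geq -\log_2\langle\psi|\sigma_n|\psi\rangle = -\log_2\Tr[\sigma_n\,\psi],
\end{equation}
with the convention that both sides equal $+\infty$ when $\psi$ leaves the support of $\sigma_n$. Thus it suffices to \emph{upper} bound the singlet overlap of every state in the cone.

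Next I would use the explicit structure of the cone. Any $\sigma_n \in \mbox{Cone}(SEP,\tilde{\sigma}_n)$ has the form $\sigma_n = p\,\tilde{\sigma}_n + (1-p)\,\sigma_{\mathrm{sep}}$ with $\sigma_{\mathrm{sep}}$ separable and $p\in[0,1]$. By linearity together with the standard fact that a $K\otimes K$ separable state overlaps $\psi$ by at most $1/K$ (lemma~7 of \cite{keyhuge}),
\begin{equation}
\Tr[\sigma_n\,\psi] = p\,f + (1-p)\,\Tr[\sigma_{\mathrm{sep}}\,\psi] \leq p\,f + (1-p)\,2^{-nR} \leq \max\{f,\,2^{-nR}\}.
\end{equation}
Taking the infimum over the cone and inserting the Jensen step yields
\begin{equation}
E_R^{\mbox{Cone}(\tilde{\sigma}_n)}(\psi) \geq -\log_2\max\{f,\,2^{-nR}\} = \min\{-\log_2 f,\; nR\}.
\end{equation}

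Finally I would close both branches against $T$. The first branch is precisely Lemma~\ref{lem:tr}, which gives $f \leq 2^{-T}$, i.e.\ $-\log_2 f \geq T$; for the second branch, since $\tfrac{\alpha-1}{\alpha}\leq 1$ on $(1,\infty)$ and $\tilde{E}_\alpha\geq 0$, one checks $nR\geq T$ (trivially when $R-\tfrac1n\tilde{E}_\alpha<0$, and from $\tfrac{\alpha-1}{\alpha}\le 1$ otherwise). Hence $\min\{-\log_2 f,\,nR\}\geq T$, which is the assertion. I expect no single hard step: the only delicate point is the pure-state reduction, where one must handle the support condition so that $D(\psi||\sigma_n)\ge -\log_2\Tr[\sigma_n\psi]$ stays valid (as $+\infty$) when $\psi$ escapes the support of $\sigma_n$. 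Everything else is linearity within the cone plus the elementary inequality $nR\ge T$, so the real content is imported wholesale from Lemma~\ref{lem:tr}.
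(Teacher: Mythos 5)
Your proof is correct and takes essentially the same route as the paper's: the pure-state Jensen reduction $D(\psi_+^{Rn}\|\sigma_n)\geq -\log_2\Tr[\psi_+^{Rn}\sigma_n]$, the decomposition of a cone element into its separable part (bounded by $2^{-nR}$ via lemma 7 of \cite{keyhuge}) and its $\tilde{\sigma}_n$ part (bounded via Lemma \ref{lem:tr}), and the observation $nR\geq n\frac{\alpha-1}{\alpha}\big(R-\frac{1}{n}\tilde{E}_\alpha(\sigma^{\otimes n})\big)$. If anything, your $\max$-based bookkeeping is a slightly cleaner packaging of the paper's weighted-combination step, but the ideas and ingredients are identical.
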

\begin{proof}
We follow the proof of Lemma $7$ of \cite{keyhuge}. Let $\sigma_n\in \mbox{Cone}(SEP,\tilde{\sigma}_n)$. Using the concavity of the logarithm function, we first
note that 
\begin{align}
    & D(\psi_+^{Rn}||\sigma_n) = \underbrace{\Tr \psi_+^{Rn} \log \psi_+^{Rn}}_{= 0} -\Tr \psi_+^{Rn} \log \sigma_n \geq \nonumber\\
    & \qquad \qquad -\log \Tr \psi_+^{Rn}\sigma_n
\end{align}
Since $\sigma_n \in \mbox{Cone}(SEP,\tilde{\sigma}_n)$, it can be expressed as $\sigma_n = p\tilde{\sigma}_n + (1-p)\sigma_{SEP}$ for some $p\in [0,1]$ and some $\sigma_{SEP}\in SEP$. Thus, we can rewrite the above equation as follows
\begin{align}
    & D(\psi_+^{Rn}||\sigma_n) \geq  -\log \Tr \psi_+^{Rn}\left(p\tilde{\sigma}_n + (1-p)\sigma_{SEP}\right) = \\
    & -\log \left(p\Tr \psi_+^{Rn}\tilde{\sigma}_n + (1-p)\Tr\psi_+^{Rn}\sigma_{SEP}\right) \geq\\
    & \qquad -\log \left(\max\{\Tr \psi_+^{Rn}\tilde{\sigma}_n, \Tr\psi_+^{Rn}\sigma_{SEP}\}\right)
\end{align}
Lemma 7 of \cite{keyhuge} states that for a separable state $\sigma_{SEP}$ there is $\Tr\psi_+^{Rn}\sigma_{SEP} \leq 2^{-Rn}$. Note that since $Rn \geq n(\frac{\alpha-1}{\alpha}) (R - \frac1n \tilde{E}_\alpha (\sigma^{\otimes n}))$, then $\Tr\psi_+^{Rn}\sigma_{SEP} \leq 2^{-n(\frac{\alpha-1}{\alpha}) (R - \frac1n \tilde{E}_\alpha (\sigma^{\otimes n}))}$. Moreover lemma \ref{lem:tr} states that $\Tr \psi_+^{Rn}\tilde{\sigma}_n \leq 2^{-n(\frac{\alpha-1}{\alpha}) (R - \frac1n \tilde{E}_\alpha (\sigma^{\otimes n}))}$. Thus, we can combine all of that to obtain
\begin{align}
    & D(\psi_+^{Rn}||\sigma_n) \geq -\log \left(\max\{\Tr \psi_+^{Rn}\tilde{\sigma}_n, \Tr\psi_+^{Rn}\sigma_{SEP}\}\right) \geq \\
    & -\log 2^{-n(\frac{\alpha-1}{\alpha}) (R - \frac1n \tilde{E}_\alpha (\sigma^{\otimes n}))} = n(\frac{\alpha-1}{\alpha}) (R - \frac1n \tilde{E}_\alpha (\sigma^{\otimes n})),
\end{align}
which is true for every $\sigma_n \in \mbox{Cone}(SEP,\tilde{\sigma}_n)$. Thus, taking the infimum on the left-hand side completes the proof.

\end{proof}

We are ready to state the main theorem. It shows that the one-way repeater rate is upper bounded by the locally measured regularized relative entropy of entanglement (increased by a factor $\frac{\alpha}{\alpha -1 }$) in addition to the sandwich relative entropy of entanglement. In what follows, we will use notation
$\rho\equiv \rho_{AC_1}$ and $\rho'\equiv \rho'_{C_2B}$, as well
as $\rho''\equiv \rho''_{ABC_1C_2}$ and $C_1C_2\equiv {\bf C}$.
\begin{theorem}
For any states $\rho_{AC_1},\rho'_{BC_2},\rho''_{ABC_1C_2}$ and a real parameter $\alpha \in (1,\infty)$, there is
\begin{equation}
    R^{\rightarrow}(\rho,\rho')\leq \frac{\alpha}{\alpha -1}D_{\bf{C}}^{\infty}(\rho\otimes \rho'||\rho'')+ \tilde{E}_\alpha (\rho'').
\end{equation}
\label{thm:main}
\end{theorem}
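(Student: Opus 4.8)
The plan is to mirror the Christandl--Ferrara/BCHW derivation of $R^{\rightarrow}\le D_{\mathbf{C}}^{\infty}(\rho\otimes\rho'\|\sigma)$ valid for separable $\sigma$, but to run it against the \emph{arbitrary} proxy $\rho''$ in place of a separable state, paying for the entanglement of the proxy through Lemma \ref{lem:tr}. First I would fix $\delta,\epsilon>0$ and a protocol $\Lambda_n\in \LOCC(\mathbf{C}\rightarrow A\leftrightarrow B)$ taking $(\rho\otimes\rho')^{\otimes n}$ to within trace distance $\epsilon$ of a private state $\gamma^{k_n}$ with $k_n/n\to R-\delta$. Its first stage is a local operation $\mathcal{M}_{\mathbf{C}}\in LO_{C_1C_2}$ with one-way communication to $AB$, so the full reduction $\Phi_n:=\Tr_{A'B'}\,\tau\,\Tr_C\,\Lambda_n(\cdot)\,\tau^{\dagger}$ (protocol, then untwist, then discard $C$ and the shield) factors through $\mathcal{M}_{\mathbf{C}}$. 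Since any private state becomes the singlet after untwisting and tracing the shield, the output $\tilde{\sigma}_n:=\Phi_n((\rho\otimes\rho')^{\otimes n})$ obeys $\Tr(\tilde{\sigma}_n\,\psi_+^{k_n})\ge 1-\epsilon$.

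I would then push the proxy through the very same channel, $\tilde{\rho}''_n:=\Phi_n((\rho'')^{\otimes n})$, and play two estimates against each other. On the one hand, monotonicity of the relative entropy under the CPTP map $\Phi_n$, together with the fact that $\mathcal{M}_{\mathbf{C}}$ is one admissible choice in the supremum defining $D_{\mathbf{C}}$, gives $\tfrac1n D(\tilde{\sigma}_n\|\tilde{\rho}''_n)\le \tfrac1n\sup_{\mathcal{M}_{\mathbf{C}}}D(\mathcal{M}_{\mathbf{C}}((\rho\otimes\rho')^{\otimes n})\|\mathcal{M}_{\mathbf{C}}((\rho'')^{\otimes n}))$, whose limit is $D_{\mathbf{C}}^{\infty}(\rho\otimes\rho'\|\rho'')$. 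On the other hand, Lemma \ref{lem:tr} applied with its arbitrary input taken to be the proxy $\rho''$ bounds the overlap of the proxy image with the singlet, $\Tr(\tilde{\rho}''_n\,\psi_+^{k_n})\le 2^{-n\frac{\alpha-1}{\alpha}(k_n/n-\frac1n\tilde{E}_\alpha((\rho'')^{\otimes n}))}$; this is exactly where the non-separability of the proxy is paid for.

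To combine them I would lower-bound $D(\tilde{\sigma}_n\|\tilde{\rho}''_n)$ by this overlap bound. This is precisely what the $\mbox{Cone}$ construction of Lemma \ref{lem:er} (applied to $\tilde{\rho}''_n$) is built to deliver: admixing separable states with $\tilde{\rho}''_n$ makes the bound robust to the $\epsilon$-closeness of $\tilde{\sigma}_n$ to $\psi_+^{k_n}$, so that, via concavity of $\log$ and the hypothesis-testing reformulation used in the lemma, $\tfrac1n D(\tilde{\sigma}_n\|\tilde{\rho}''_n)\gtrsim \tfrac{\alpha-1}{\alpha}\big(\tfrac{k_n}{n}-\tfrac1n\tilde{E}_\alpha((\rho'')^{\otimes n})\big)$. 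Chaining this with the monotonicity estimate, letting $n\to\infty$ and then $\epsilon,\delta\to 0$, and solving for $R$ yields $R^{\rightarrow}(\rho,\rho')\le \frac{\alpha}{\alpha-1}D_{\mathbf{C}}^{\infty}(\rho\otimes\rho'\|\rho'')+\lim_n\tfrac1n\tilde{E}_\alpha((\rho'')^{\otimes n})$.

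The main obstacle is the last term. The measured relative entropy is intrinsically regularized, whereas the sandwiched R\'enyi entanglement $\tilde{E}_\alpha$ is not known to be additive, so $\tfrac1n\tilde{E}_\alpha((\rho'')^{\otimes n})$ cannot simply be driven to its regularization. Controlling this term at a fixed, finite number of copies, and in particular accounting for the two-sided structure of the repeater ($\rho_{AC_1}$ versus $\rho'_{C_2B}$), is what forces the stated two-copy quantity $\tilde{E}_\alpha(\rho''\otimes\rho'')$ rather than a regularized or single-copy expression; getting this bookkeeping to land exactly on the claimed form is the delicate step, since everything else is just monotonicity of relative entropy plus the two lemmas. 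I would also double-check that the first stage of a generic $\LOCC(\mathbf{C}\rightarrow A\leftrightarrow B)$ protocol genuinely factors as a local measurement on $\mathbf{C}$ followed by operations not touching $\mathbf{C}$, because the reduction of $D_{\mathbf{C}}^{\infty}$ to $D(\tilde{\sigma}_n\|\tilde{\rho}''_n)$ relies on exactly that factorization.
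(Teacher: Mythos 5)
Your derivation is, in all essentials, the paper's own proof: monotonicity of the measured relative entropy under the protocol (using that, by the very definition of $LOCC({\bf C}\rightarrow A\leftrightarrow B)$, its first stage is a local operation on ${\bf C}$, so the factorization you wanted to double-check holds by definition), privacy squeezing of both the protocol output and the image of the proxy, relaxation to the convex set $\mbox{Cone}(SEP,\cdot)$ with the proxy image as apex, asymptotic continuity of the relative entropy distance from that convex set to replace the $\epsilon$-approximate output by $\psi_+^{k_n}$, and finally Lemma \ref{lem:er} (which rests on Lemma \ref{lem:tr} applied to $\rho''$). Up to the last term, your chain $R\leq \frac{\alpha}{\alpha-1}D_{\bf C}^{\infty}(\rho\otimes\rho'\|\rho'')+\limsup_n \frac1n \tilde E_\alpha((\rho'')^{\otimes n})$ is exactly what the paper obtains.

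The one place you stop short, and declare the "main obstacle", is in fact a one-line step, and you are reaching for the wrong property. You do not need additivity of $\tilde E_\alpha$, nor its regularization: the term $\frac1n\tilde E_\alpha((\rho'')^{\otimes n})$ enters your bound with a favorable sign, so an \emph{upper} bound on it suffices, and subadditivity supplies one at every finite $n$. Since $\tilde D_\alpha$ is additive on tensor products and the set of separable states is closed under tensor products, $\tilde E_\alpha((\rho'')^{\otimes n})\leq n\,\tilde E_\alpha(\rho'')$; this is precisely what the paper invokes (Eq. 5.26 of \cite{WTB2017}). Hence $R\leq \frac{\alpha}{\alpha-1}D_{\bf C}^{\infty}(\rho\otimes\rho'\|\rho'')+\tilde E_\alpha(\rho'')$, which is in fact \emph{stronger} than the stated bound: applying monotonicity of $\tilde D_\alpha$ under partial trace to the optimal separable state for $\rho''\otimes\rho''$ gives $\tilde E_\alpha(\rho'')\leq \tilde E_\alpha(\rho''\otimes\rho'')$. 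So no "two-sided bookkeeping" is needed to land on $\tilde E_\alpha(\rho''\otimes\rho'')$; the two-copy expression is simply a weakening (the paper's own proof is loose here, writing $(\rho''\otimes\rho'')^{\otimes n}$ where $(\rho'')^{\otimes n}$ is what the first part of its chain produces). With the subadditivity step inserted, your argument is complete.
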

\begin{proof}
Our proof closely follows \cite{CF17} (Corollary 25), with 
suitable modifications. We first note that there is the following chain of (in)equalities, as it is noted in \cite{CF17}:
\begin{align}
&{\frac1n}D_{\bf{C}}(\rho^{\otimes n}\otimes \rho'^{\otimes n}||\rho''^{\otimes n})\geq \\
&{\frac1n}D({\cal M}_{\bf{C}}(\rho^{\otimes n}\otimes \rho'^{\otimes n})||{\cal M}_{\bf{C}}(\rho''^{\otimes n}))\geq \\
    & {\frac1n}D(\Lambda \circ{\cal M}_{\bf{C}}(\rho^{\otimes n}\otimes \rho'^{\otimes n})||\Lambda \circ{\cal M}_{\bf{C}}(\rho''^{\otimes n}))\geq \label{eq:starting}\\
&\frac1n D(\tilde{\gamma}^{Rn}||\tilde{\sigma}_n),
\end{align}
where $\Lambda$ is an optimal one-way protocol
realizing $R^{\rightarrow}(\rho,\rho')$, and $\tilde{\gamma}^{Rn}$ is a state close to some private state $\gamma^{Rn} = (\tau |\psi_+\>\<\psi_+|\otimes \rho_{A'B'} \tau^{\dagger})^{Rn}$ by $\epsilon$. The state $\tilde{\sigma}_n$ is equal to $\Tr_C\mathcal{L}(\rho''^{\otimes n})$, where $\mathcal{L}$ is an arbitrary operation from $LOCC({\bf C}\rightarrow (A\leftrightarrow B))$.
In the above, we have used the monotonicity of the locally measured, regularized relative entropy of entanglement under the joint application of a CPTP map.

We further have the following chain of (in)equalities:
\begin{align}
&    {\frac1n}D_{\bf{C}}(\rho^{\otimes n}\otimes \rho'^{\otimes n}||\rho''^{\otimes n})\geq  \\
& \frac1n D(\tilde{\gamma}^{Rn}||\tilde{\sigma}_n) \geq\frac1n D({\cal S}(\tilde{\gamma}^{Rn})||{\cal S}(\tilde{\sigma}_n)) \geq \label{eq:psq}\\
& \frac1n \inf_{\sigma\in \mbox{Cone}(SEP,{\cal S}(\tilde{\sigma}_n))}D({\cal S}(\tilde{\gamma}^{Rn})||\sigma) = \frac1n E_R^{\mbox{Cone}}({\cal S}(\tilde{\gamma}^{Rn})) \label{eq:relax}
\end{align}
In equation (\ref{eq:psq}), we use joint monotonicity of the relative entropy under
a privacy squeezing map ${\cal S}(\cdot) = \Tr_{A'B'}\tau^\dagger(\cdot)\tau$, i.e., the inverse operation to twisting $\tau$ composed with tracing out the shielding system. The subsequent inequality follows from the relaxation of entropy distance to the infimum over a convex
set - the cone of states obtained by admixing 
any separable state to the state ${\cal S}(\tilde{\sigma}_n)$. The latter state forms
the apex of this set.

We continue lowerbounding ${\frac1n}D_{\bf{C}}(\rho^{\otimes n}\otimes \rho'^{\otimes n}||\rho''^{\otimes n})$ as follows
\begin{align}
& {\frac1n}D_{\bf{C}}(\rho^{\otimes n}\otimes \rho'^{\otimes n}||\rho''^{\otimes n})\geq \\
& \frac1n E_R^{\mbox{Cone}}({\cal S}(\tilde{\gamma}^{Rn}))\geq \label{eq:relative-entropy-asymptotic-cont}\\
& \frac1n \left[E_R^{\mbox{Cone}}(\psi_+^{Rn}) - O(\epsilon) - h\left(\frac{\varepsilon}{1 + \varepsilon}\right) \right] \geq \\
&\frac1n \left[n\frac{\alpha -1}{\alpha}\left(R - \frac1n \tilde{E}_\alpha(\rho''^{\otimes n})\right)\right] - \\
& \qquad \qquad {1\over n} \left[O(\epsilon) + h\left(\frac{\varepsilon}{1 + \varepsilon}\right) \right] \geq \\
& \frac{\alpha -1}{\alpha}\left(R - \tilde{E}_\alpha(\rho'')\right) - {1\over n} \left[O(\epsilon) + h\left(\frac{\varepsilon}{1 + \varepsilon}\right) \right].
\end{align}
Inequality \eqref{eq:relative-entropy-asymptotic-cont} follows from the fact that the relative entropy distance from a bounded convex set containing a maximally mixed state is asymptotically continuous (see Lemma 7 \cite{Winter_2016}). Next, since ${\cal S}(\tilde{\sigma}_n) = \Tr_{A'B'}\tau^\dagger\Tr_C\mathcal{L}(\rho''^{\otimes n})\tau$, we use Lemma \ref{lem:er}. The last inequality follows from subadditivity of $\tilde{E}_\alpha$ (see Eq. $5.26$ of \cite{WTB2017}). By rearranging the obtained inequality, we get
\begin{align}
    & \frac{\alpha}{\alpha - 1}{\frac1n}D_{\bf{C}}(\rho^{\otimes n}\otimes \rho'^{\otimes n}||\rho''^{\otimes n}) + \tilde{E}_\alpha(\rho'') \geq  \\
    & \qquad \qquad \qquad R - {1\over n} \frac{\alpha}{\alpha - 1} \left[O(\epsilon) + h\left(\frac{\varepsilon}{1 + \varepsilon}\right) \right].
\end{align}
Taking the limit $n\to \infty$, we end the proof.

\end{proof}
From the Theorem \ref{thm:main} and the main result of \cite{CF17}, which states that any key correlated state $\rho$ and its attacked version $\hat \rho$ satisfy 
\begin{equation}
    E_D^{A\rightarrow B}(\rho_{AB}) \geq  D_{A}^{\infty}(\rho||{\hat \rho}),
    \label{eq:CF_bound}
\end{equation}
we have an immediate corollary stated below.
\begin{corollary} 
For a key-correlated states $\rho_{AC_1},\rho'_{BC_2}$ given in Eq. (\ref{eq:key_correlated}) and its attacked version $\hat{\rho}_{AC_1},\hat{\rho}'_{BC_2}$ there is

\begin{equation}
    R^{\rightarrow}(\rho,\rho') \leq \frac{\alpha}{\alpha -1 } E_D^{\rightarrow }(\rho\otimes \rho') + \min\{\tilde{E}_\alpha(\hat{\rho}), \tilde{E}_\alpha(\hat{\rho}') \}.
    \label{eq:simplified1}
\end{equation}

\label{cor:simplified1}
\end{corollary}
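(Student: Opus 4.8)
The plan is to obtain the Corollary by specializing the \emph{free} proxy state $\rho''$ in Theorem \ref{thm:main} to the product of the two key-attacked states, and then to convert the resulting locally-measured relative-entropy term into one-way distillable entanglement using the bound of \cite{CF17} recalled in Eq. (\ref{eq:CF_bound}). First I would invoke Theorem \ref{thm:main} with the admissible choice $\rho''=\hat{\rho}\otimes\hat{\rho}'$. Since $\hat{\rho}=\hat{\rho}_{AC_1}$ is supported on $AC_1$ and $\hat{\rho}'=\hat{\rho}'_{C_2B}$ on $C_2B$, their tensor product is a legitimate state on $ABC_1C_2$, so it is an allowed proxy. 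This reproduces the sandwiched-R\'enyi term of the Corollary directly and leaves me with
\begin{equation}
R^{\rightarrow}(\rho,\rho') \leq \frac{\alpha}{\alpha-1}\,D_{\bf C}^{\infty}(\rho\otimes\rho'\,||\,\hat{\rho}\otimes\hat{\rho}') + \tilde{E}_\alpha(\hat{\rho}\otimes\hat{\rho}').
\end{equation}

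The remaining task is to show $D_{\bf C}^{\infty}(\rho\otimes\rho'\,||\,\hat{\rho}\otimes\hat{\rho}') \leq E_D^{\rightarrow}(\rho\otimes\rho')$, after which substitution (the coefficient $\alpha/(\alpha-1)$ being positive preserves the inequality) finishes the argument. Here I would identify the measured party ${\bf C}=C_1C_2$ appearing in $D_{\bf C}^\infty$ with the measured ``$A$'' side of the CF bound, reading the one-way direction as $C_1C_2\rightarrow AB$. Two facts are needed: (i) that $\rho\otimes\rho'$ is itself a key-correlated state with respect to the regrouped cut ${\bf C}:AB$, and (ii) that $\hat{\rho}\otimes\hat{\rho}'$ is precisely its key-attacked state in the sense of Definition \ref{def:kas}. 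Point (ii) is the easy half: $\hat{\rho}$ arises by a computational-basis measurement of the key parts $A$ and $C_1$, while $\hat{\rho}'$ measures $C_2$ and $B$, so their product measures exactly the four key parts $A,C_1,C_2,B$; since the measurement factorizes over the tensor product, $\widehat{\rho\otimes\rho'}=\hat{\rho}\otimes\hat{\rho}'$. Granting (i) and (ii), Eq. (\ref{eq:CF_bound}) applied in the ${\bf C}:AB$ cut yields the desired inequality.

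I expect the genuine obstacle to be point (i): verifying that a tensor product of two key-correlated states, each defined relative to its own bipartition ($A:C_1$ and $C_2:B$), remains key-correlated after regrouping into the single cut $C_1C_2:AB$. Concretely, one must check that the products $\phi_\mu^{AC_1}\otimes\phi_\nu^{C_2B}$ of generalized Bell states reorganize (up to a local relabeling of indices) into a generalized Bell basis for the $C_1C_2:AB$ split, so that the defining form of Eq. (\ref{eq:key_correlated}) is inherited by the product with coefficient matrices $M^{(\mu,\nu)}\otimes M'^{(\mu',\nu')}$. This closure of the key-correlated class under tensor products and regrouping is the structural fact on which the applicability of the CF bound rests; it is implicit in \cite{CF17}, and I would make it explicit before quoting Eq. (\ref{eq:CF_bound}) so that Theorem \ref{thm:CF_dist} can legitimately be applied to $\rho\otimes\rho'$ and its attacked version.
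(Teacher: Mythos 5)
Your proposal follows essentially the same route as the paper's own (very terse) proof: the paper likewise instantiates Theorem \ref{thm:main} with the proxy state $\rho''=\hat{\rho}\otimes\hat{\rho}'$ and bounds $D_{\bf C}^{\infty}(\rho\otimes\rho'||\hat{\rho}\otimes\hat{\rho}')$ from above by $E_D^{\rightarrow}(\rho\otimes\rho')$ via Eq. (\ref{eq:CF_bound}), identifying the measured side $A$ with ${\bf C}=C_1C_2$ and $B$ with $AB$. The structural facts you flag as point (i) and (ii)---closure of the key-correlated class under tensoring and regrouping into the ${\bf C}:AB$ cut, and factorization of the key-attacked state---are exactly what the paper leaves implicit (deferring to \cite{CF17}), so making them explicit is a faithful and slightly more careful rendering of the same argument.
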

\begin{proof}
We identify $A$ and $B$ from Eq. (\ref{eq:CF_bound}) with ${\bf C}\equiv C_1C_2$, and $AB$ respectively, to bound $D^{\infty}_{\bf C}(\rho\otimes \rho' \| \hat{\rho}\otimes \hat{\rho}')$ by $E_D^{\rightarrow}(\rho\otimes \rho')$ from above. We further use Theorem
\ref{thm:main} to obtain the following inequality
\begin{equation}
    R^{\rightarrow}(\rho,\rho') \leq \frac{\alpha}{\alpha -1 } E_D^{\rightarrow }(\rho\otimes \rho') + \tilde{E}_\alpha(\hat{\rho}\otimes \hat{\rho}').
\end{equation}
Observe now that in all of the previous proofs we define $\tilde{E}_\alpha$ as an infimum with respect to states separable in $AC_1C_2:B$ cut. Thus we can bound $\tilde{E}_\alpha(\hat{\rho}_{AC_1}\otimes \hat{\rho}'_{BC_2})$ as follows 
\begin{align}
    &\tilde{E}_\alpha(\hat{\rho}_{AC_1}\otimes \hat{\rho}'_{BC_2})\leq \\
    &\qquad \tilde{D}_\alpha(\hat{\rho}_{AC_1}\otimes \hat{\rho}'_{BC_2} \| \hat{\rho}_{AC_1}\otimes \sigma_{BC_2}) \\
    &\qquad \qquad \underbrace{=}_{\text{data processing \cite{WWY14,MDSFT}}} \tilde{D}_\alpha(\hat{\rho}'_{BC_2} \| \sigma_{BC_2}),
\end{align}
where $\sigma_{BC_2}$ is arbitrary separable state. Since it holds true for any choice of separable state $\sigma_{BC_2}$, we can write
\begin{align}
    & \tilde{E}_\alpha(\hat{\rho}_{AC_1}\otimes \hat{\rho}'_{BC_2})\leq \\
    & \qquad \inf_{\sigma_{BC_2}\in\mathrm{SEP}} \tilde{D}_\alpha(\hat{\rho}'_{BC_2} \| \sigma_{BC_2}) = \tilde{E}_\alpha(\hat{\rho}'_{BC_2}),
\end{align}
which implies that 
\begin{equation}\label{eq:bound-on-repeater-rate-b-cut}
    R^{\rightarrow}(\rho,\rho') \leq \frac{\alpha}{\alpha -1 } E_D^{\rightarrow }(\rho\otimes \rho') + \tilde{E}_\alpha(\hat{\rho}').
\end{equation}
On the other hand, notice that our considerations remain true if we define $\tilde{E}_\alpha$ as an infimum with respect to states separable in $A:C_1C_2B$ cut. Thus, by similar arguments, we can show that
\begin{equation}\label{eq:bound-on-repeater-rate-a-cut}
    R^{\rightarrow}(\rho,\rho') \leq \frac{\alpha}{\alpha -1 } E_D^{\rightarrow }(\rho\otimes \rho') + \tilde{E}_\alpha(\hat{\rho}).
\end{equation}
By combining equations \eqref{eq:bound-on-repeater-rate-b-cut} and \eqref{eq:bound-on-repeater-rate-a-cut} we get
\begin{equation}
    R^{\rightarrow}(\rho,\rho') \leq \frac{\alpha}{\alpha -1 } E_D^{\rightarrow }(\rho\otimes \rho') + \min\{\tilde{E}_\alpha(\hat{\rho}), \tilde{E}_\alpha(\hat{\rho}') \},
\end{equation}
which finishes the proof.

\end{proof}

We can introduce some simplifications by considering the extremal value of $\alpha$ and the behavior of the entanglement monotones involved in the above corollary. This leads us to a relaxed bound for a single state.

\begin{corollary} For a key correlated state $\rho$ given in Eq. (\ref{eq:key_correlated}) and its key attacked version $\hat \rho$ there is
\begin{equation}
    R^{\rightarrow}(\rho,\rho) \leq 2E_D^{\rightarrow }(\rho) + E_{max}(\hat{\rho}).
\label{eq:simplify2}
\end{equation}

\label{cor:simplify2}
\end{corollary}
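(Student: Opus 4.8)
The plan is to specialize Corollary~\ref{cor:simplified1} to the symmetric case $\rho'=\rho$ and then extract the claimed bound by taking the extremal limit $\alpha\to\infty$ of the free parameter. Starting from Eq.~(\ref{eq:simplified1}) with $\rho'=\rho$ and $\hat\rho'=\hat\rho$, I would first record
\begin{equation}
R^{\rightarrow}(\rho,\rho)\leq \frac{\alpha}{\alpha-1}E_D^{\rightarrow}(\rho\otimes\rho)+\tilde{E}_\alpha(\hat\rho\otimes\hat\rho),
\end{equation}
which holds for every $\alpha\in(1,\infty)$, so in particular the left-hand side is bounded by the infimum over $\alpha$ of the right-hand side. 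The strategy is then to control each of the two terms separately in the limit $\alpha\to\infty$.

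For the first term, the prefactor $\frac{\alpha}{\alpha-1}\to 1$ as $\alpha\to\infty$, and I would use additivity (or at least superadditivity giving the needed direction) of the one-way distillable entanglement on two copies of the same state to write $E_D^{\rightarrow}(\rho\otimes\rho)\leq 2E_D^{\rightarrow}(\rho)$; in the symmetric tensor-power setting this factor of two is exactly what produces the $2E_D^{\rightarrow}(\rho)$ appearing in Eq.~(\ref{eq:simplify2}). For the second term, I would invoke the fact recalled in the preliminaries that $\tilde{E}_\alpha(\cdot)\to E_{\max}(\cdot)$ as $\alpha\to\infty$ (shown in \cite{Christandl2017,Zhu2017}), together with the subadditivity of $E_{\max}$ on product states, to obtain $\lim_{\alpha\to\infty}\tilde{E}_\alpha(\hat\rho\otimes\hat\rho)\leq 2E_{\max}(\hat\rho)$. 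Combining the two limits gives $R^{\rightarrow}(\rho,\rho)\leq 2E_D^{\rightarrow}(\rho)+2E_{\max}(\hat\rho)$.

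The main obstacle I anticipate is justifying the interchange of the bound with the limit $\alpha\to\infty$ cleanly, since the two terms behave differently: the distillable-entanglement term is \emph{decreasing} toward its limit (the prefactor shrinks to $1$) while the Rényi term is \emph{increasing} toward $E_{\max}$ (the sandwiched Rényi relative entropy $\tilde{D}_\alpha$, hence $\tilde{E}_\alpha$, is monotonically nondecreasing in $\alpha$). Because the inequality in Eq.~(\ref{eq:simplified1}) holds for \emph{each} fixed $\alpha$, the correct move is not to take the limit inside a single expression but to note that the right-hand side converges, as $\alpha\to\infty$, to the sum of the two limiting values; since the left-hand side does not depend on $\alpha$, it is bounded by this limit. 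I would make this explicit by bounding each term by its supremal/infimal limiting value and then passing to $\alpha\to\infty$, so that no delicate uniform-convergence argument is needed—only the separately established limits $\frac{\alpha}{\alpha-1}E_D^{\rightarrow}(\rho\otimes\rho)\to E_D^{\rightarrow}(\rho\otimes\rho)\leq 2E_D^{\rightarrow}(\rho)$ and $\tilde{E}_\alpha(\hat\rho\otimes\hat\rho)\to E_{\max}(\hat\rho\otimes\hat\rho)\leq 2E_{\max}(\hat\rho)$.
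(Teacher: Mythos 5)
Your proposal is correct and takes essentially the same route as the paper's proof: both specialize Corollary~\ref{cor:simplified1} to $\rho'=\rho$, treat the two terms separately using additivity of $E_D^{\rightarrow}$ on tensor powers together with subadditivity and the limit $\tilde{E}_\alpha \to E_{max}$ as $\alpha\to\infty$ (the paper applies subadditivity to $\tilde{E}_\alpha$ \emph{before} taking the limit, citing Eq.~5.26 of \cite{WTB2017}, while you take the limit first and apply subadditivity to $E_{max}$ --- an immaterial reordering, since the left-hand side is $\alpha$-independent). One small correction: your parenthetical fallback ``or at least superadditivity giving the needed direction'' is backwards --- superadditivity would give $E_D^{\rightarrow}(\rho\otimes\rho)\geq 2E_D^{\rightarrow}(\rho)$, which is the wrong direction; what you need is subadditivity, and the paper simply invokes additivity, which holds by definition of the regularized quantity on tensor powers of the same state.
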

\begin{proof}
It follows straightforwardly from the Corollary \ref{cor:simplified1}. Indeed, consider first the case $\rho=\rho'$, and note that $\tilde{E}_\alpha$ is subadditive (see Eq. $5.26$ of \cite{WTB2017}) and $E_D^{\rightarrow}$ is extensive by definition. This implies 
\begin{equation}
    R^{\rightarrow}(\rho,\rho) \leq 2({\alpha \over \alpha -1})E_D^{\rightarrow }(\rho) + \tilde{E}_\alpha(\hat{\rho}).
    \label{eq:single_state_bound}
\end{equation}

Relaxing the above inequality by taking the limit of $\alpha \rightarrow \infty$, we arrive at the claim.
\end{proof}

A special case of a key correlated state is a private bit (a private state with a key part of local dimension $d_k=2$).
When the private state is taken at random, we can then upper bound the term $E_{max}$ to obtain the bound
which is dependent only on the one-way distillable entanglement and a constant factor. We thus arrive at the second main result of this section.
\begin{theorem}
For a random private bit (a private state with $d_k=2$ and arbitrary finite dimensional shield part $d\otimes d<\infty$) there is
\begin{equation}
    R^{\rightarrow}(\rho,\rho) \leq 2E_D^{\rightarrow}(\rho) + 1.
    \label{eq:first_bound}
\end{equation}

\label{thm:first_bound}
\end{theorem}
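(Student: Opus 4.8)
The plan is to read the statement off Corollary~\ref{cor:simplify2}, which for any key correlated state already gives $R^{\rightarrow}(\rho,\rho)\le 2E_D^{\rightarrow}(\rho)+2E_{max}(\hat\rho)$. The entire content of the theorem is therefore the single inequality $E_{max}(\hat\rho)\le 1$. This cannot hold for an \emph{arbitrary} private bit (a rank-one maximally entangled conditional shield would force $E_{max}\approx\log_2 d$), so it must be established for a \emph{random} bit, with high probability over the draw.

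First I would fix the structure of the key-attacked bit. Writing the pbit through its operator $X=U\sigma$ with $\|X\|_1=1$ \cite{keyhuge}, the key-attacked state is block diagonal across $AA':BB'$,
\begin{equation}
\hat\gamma=\tfrac12\proj{00}_{AB}\otimes\rho_0+\tfrac12\proj{11}_{AB}\otimes\rho_1,\qquad \rho_0=\sqrt{XX^\dagger},\ \rho_1=\sqrt{X^\dagger X}.
\end{equation}
Because $\proj{ii}_{AB}=\proj{i}_A\otimes\proj{i}_B$ is a product across the cut, a separable dominating state may be taken block diagonal, giving $E_{max}(\hat\gamma)\le\max_{i}E_{max}(\rho_i)$ (with equality, since a local filter onto a fixed key value returns $\rho_i$). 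Thus I only need to bound $E_{max}$ of a single random shield state on $\mathbb{C}^d\otimes\mathbb{C}^d$; moreover $\rho_0$ and $\rho_1$ share the spectrum of $X$ but sit in two independent Haar eigenbases, so they are identically distributed and a union bound handles the maximum.

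Next I would import the random-matrix facts of Appendix~\ref{app:random}. The constraint $\|X\|_1=1$ fixes the normalisation, and the concentration of the singular-value law forces the spectrum of $\rho_i$ to concentrate on its deterministic limit, with top eigenvalue $\lambda_{max}(\rho_i)=s_{max}(X)$ of order $1/d^2$ and marginals concentrating on $\mathrm{I}/d$. This is the weakly entangled regime in which one may hope to dominate $\rho_i$ by a separable state at the cost of one bit, i.e. to exhibit separable $\sigma$ with $\rho_i\le 2\sigma$.

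The hard part is to make this constant exactly $1$, that is to verify $\lambda_{max}(\sigma^{-1/2}\rho_i\sigma^{-1/2})\le 2$ for a good separable $\sigma$. The naive product (or maximally mixed) choice only yields $\lambda_{max}\approx d^{2}\lambda_{max}(\rho_i)$, a universal constant that concentrates slightly \emph{above} $2$; hence no product state suffices, and the real obstacle is to tailor a separable state adapted to the random eigenbasis (or to sharpen the spectral estimate) so that the threshold $2$ is met with overwhelming probability. Granting this, $E_{max}(\rho_i)\le 1$ for both $i$, so $E_{max}(\hat\rho)\le 1$, and substitution into Corollary~\ref{cor:simplify2} yields $R^{\rightarrow}(\rho,\rho)\le 2E_D^{\rightarrow}(\rho)+2$.
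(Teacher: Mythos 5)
Your route is the same as the paper's: read the statement off Corollary~\ref{cor:simplify2}, reduce everything to the single claim $E_{max}(\hat\rho)\le 1$ for the key-attacked state of a random pbit, treat $\hat\rho$ blockwise, and feed in random-matrix spectral data. Your blockwise identity $E_{max}(\hat\gamma)=\max_i E_{max}(\rho_i)$ is in fact handled more carefully than in the paper (the weights $\tfrac12$ appear on both sides of the domination and cancel). But your proof is incomplete exactly where you flag it: you never exhibit a separable $\sigma$ with $\rho_i\le 2\sigma$, you only ``grant'' it. Your quantitative remark also needs sharpening: $\rho_0=\sqrt{XX^\dagger}$ and $\rho_1=\sqrt{X^\dagger X}$ share the spectrum of a Hilbert--Schmidt random state of dimension $d^2$, whose top eigenvalue concentrates at the Marchenko--Pastur edge $4/d^2$, so $d^2\lambda_{max}(\rho_i)$ concentrates at $4$ (not ``slightly above $2$''), and the maximally mixed certificate yields only
\begin{equation}
E_{max}(\rho_i)\le\log_2\!\left(d^2\,\lambda_{max}(\rho_i)\right)\longrightarrow\log_2 4=2,
\end{equation}
i.e.\ $E_{max}(\hat\rho)\le 2$ and hence $R^{\rightarrow}(\rho,\rho)\le 2E_D^{\rightarrow}(\rho)+4$. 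Nothing in your proposal closes this factor-of-two gap, so as written it does not prove the theorem.

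What makes your ``hard part'' worth spelling out is that the paper's own proof is precisely the naive route you rejected, and it reaches the constant $1$ only through inconsistent bookkeeping: it bounds $\lambda_{max}(\hat\rho)\le\tfrac12\cdot\tfrac{4}{d^2}=\tfrac{2}{d^2}$ and then multiplies by the shield dimension $d^2$, writing $E_{max}\le\log_2(d^2\cdot\tfrac{2}{d^2})=1$; but $\hat\rho$ is supported on a $2d^2$-dimensional space, and any block-diagonal separable dominating state must itself split its unit trace as $\tfrac12,\tfrac12$ between the key values, so the halving of the eigenvalue cannot be harvested. Done consistently, the paper's computation gives exactly your bound $E_{max}(\hat\rho)\le 2$ --- in agreement with the paper's own Fig.~\ref{fig:numerical-intuition}, whose caption justifies $\lambda=2$, not $\lambda=1$. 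Simple repairs stay stuck at the edge value: for instance admixing $\rho_i$ into $\id/d^2$ within the Gurvits separability ball (mixing weight $O(1/d)$) still forces $2^\lambda\to 4$. So the step you isolate is a genuine hole in both your proposal and the paper's argument. Note finally that the inequality itself is safe in the large-$d_s$ regime, but by the paper's later, different argument: Corollary~\ref{cor:tighter} together with $R^{\rightarrow}(\rho,\rho)\le K_D(\rho)$ gives $R^{\rightarrow}(\rho,\rho)\le 1+\tfrac{1}{4\ln 2}<2\le 2E_D^{\rightarrow}(\rho)+2$, i.e.\ the theorem follows from the mutual-information/squashed-entanglement bound rather than from any $E_{max}$ estimate of the kind attempted here.
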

\begin{proof}
In what follows, for the ease of notation, we will refer to $d_s$, i.e., the dimension of the shielding system of the key correlated state, as $d$.
We will relax the upper bound given in Eq. (\ref{eq:simplify2}) provided in the Corollary \ref{cor:simplify2}.
Let us recall first that $E_{max}(\rho):= \inf_{\sigma \in SEP} \inf\{\lambda \in{\mathrm R}: \rho \leq 2^{\lambda}\sigma\}$.
As a further upper bound, one can use the norm $||.||_{\infty}$. This is because
$E_{max}(\rho) \leq \log_2 \left(d^2||\rho||_\infty\right)$. The latter inequality follows from the fact that the maximally mixed state $\id \over d^2$ is separable and majorizes every state. We then set $\sigma=\frac{\id}{d^2}$ in the definition of $E_{max}$. If we have then an upper bound on the maximal eigenvalue of $\rho$, denoted as $\kappa_{max}$  the value $\lambda = \log_2(d^2\kappa_{max}))$ satisfies $\rho \leq 2^{\lambda} \id/d^2$, and hence $E_{max} \leq \log_2(d^2\kappa_{max})$. It is known~\cite{marvcenko1967distribution}, that the maximal eigenvalue of a random state $\rho_{rand}$ of dimension $d^2$ is upper bounded by 
\begin{equation}
    \lambda_{max}(\rho_{rand}) \leq {4\over d^2}.
\end{equation}
The key attacked state $\hat{\rho}$ of $\rho$ is of the form
\begin{equation}
   \hat{\rho} = {1\over 2}|00\>\<00|\otimes \rho_1 + 
    {1\over 2}|11\>\<11|\otimes \rho_2, 
\end{equation}
where $\rho_i$ are also random states.
Hence, its maximal eigenvalue is upper bounded by ${1\over 2d^2}$. We have then that $E_{max} \leq \log_2 (d^2 \times ({2\over d^2}))=1$, hence the assertion follows.
\end{proof}

Numerical intuition for justification of setting $\sigma=\frac{\id}{d^2}$ in the proof of Theorem~\ref{thm:first_bound}
to upper bound $E_{max}$ for a random state is presented in Fig.~\ref{fig:numerical-intuition}.
The top plot shows the smallest eigenvalue of $2^\lambda \sigma - \rho$ for a fixed random $\rho$ of order $d^2$ and $100$ randomly
chosen $\sigma$ for dimensions $d=8, 16, 32$. As $d$ increases, the eigenvalues become positive for $\lambda \geq 2$. To drive the point on this intuition
further, the bottom plot in Fig.~\ref{fig:numerical-intuition} shows a logarithmic plot
of the absolute value of the maximum of these eigenvalues over 100 randomly chosen $\sigma$ and $\lambda=2$
as a function of dimension $d$.

\begin{figure}[!htp]
    \centering
    \includegraphics[width=0.48\textwidth]{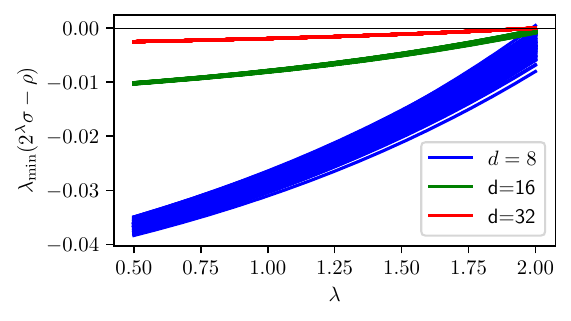}
    \includegraphics[width=0.48\textwidth]{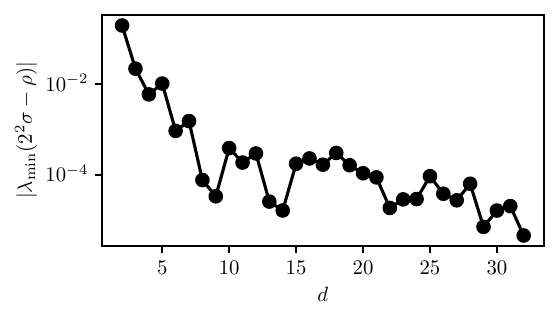}
    \caption{Minimal eigenvalue of matrix $2^\lambda \sigma - \rho$ as a function of $\lambda$ plotted
    for a fixed random $\rho$ of order $d^2$ and 100 randomly selected separable $\sigma$ (top) and  a logarithmic plot
    of the absolute value of the maximum of these eigenvalues over 100 randomly chosen $\sigma$ and $\lambda=2$
    as a function of dimension $d$ (bottom).
    As can be seen, the plots seem to justify the intuition behind the assumption $\lambda=2$ in
    the proof of Theorem~\ref{thm:first_bound}.}
    \label{fig:numerical-intuition}
\end{figure}

The bound given in Eq. (\ref{eq:first_bound}) is loose as it has a non-vanishing factor of $1$. However, it is valid for any dimension of the shielding system of the state, and this factor is independent of the dimension. It implies that no matter how large is the shielding system of a key correlated state, generically, it brings in at most $1$ bits of the repeated key above the (twice) one-way distillable entanglement.

Besides that, there is an independent
bound (see Corollary \ref{cor:tighter}), which can be obtained not from the upper bounding via relative entropy but squashed entanglement, \cite{Christandl2017,Tucci}. In the next section, we will show that the (half of) the {\it mutual information} of a random private state does not exceed the minimal value, which is key by more than $\frac{1}{4\ln 2}$. The bound will then follow from a trivial inequality $R^\rightarrow(\rho,\rho) \le R(\rho,\rho)\leq K_D(\rho)$ \cite{BCHW15}.

\section{Mutual information bound for the secure content of a random private bit}
\label{sec:key_bound}
This section focuses on random private bits and provides an upper bound on the distillable key.
We perform the random choice of a private state differently from that in \cite{Christandl2020}. There, the private state
has been first transformed via a one-way LOCC protocol
to the form of the so-called Bell-private bit of the form $\frac12\left(|\psi_+\>\<\psi_+|\otimes \rho_+ + |\psi_-\>\<\psi_-|\otimes \rho_-\right)$. Then the two states were defined as $\rho_{\pm}:= U\bar{\rho}_{\pm}U$, where $\bar{\rho}_{\pm}$ is arbitrarily chosen state of the form $\bar{\rho}_+:=\frac{Q}{Tr Q}$ with $Q$ being a projector onto any $d^2/2$ dimensional subspace of $A'B'$ system, and $\bar{\rho}_-:=\frac{Q^{\perp}}{Tr Q^{\perp}}$ with a projector $Q^\perp$ complementary to $Q$. Our approach is different, as 
it also applies to the smallest dimension of the shield $d_s=2$. We do not transform a private bit
into a Bell private bit. We rather observe that any private bit is uniquely represented by a (not necessarily normal) operator $X$ of trace norm equal to 1. We then represent this operator
as $U\sigma$ and choose a random state $\sigma$ and random unitary $U$ according to the Haar measure.

We begin by introducing the notion of {\it random} private states. Looking at Eq.~\eqref{eq:private-state}, we have two objects that can be chosen at random:
the unitaries $U_0, U_1$ and the shared state $\sigma_{A'B'}$. Due to the unitary invariance of the ensemble of random quantum states, we can set one of the unitaries
$U_0 = \id$ and the second one to be chosen randomly, denoted $U_1=U$. For simplicity, let us also introduce the notation $X= U \sigma_{A'B'}$. Hence, we can rewrite
Eq.~\eqref{eq:private-state} as
\begin{equation}
\gamma_{ABA'B'} =  {1\over 2} 
\begin{pmatrix}
\sqrt{XX^\dagger} & 0 & 0 & X\\
0 & 0 & 0 & 0 \\
0 & 0 & 0 & 0 \\
X^\dagger & 0 & 0 & \sqrt{X^\dagger X}
\end{pmatrix}.\label{eq:gamma}
\end{equation}
A short introduction to the properties of random unitaries and quantum states can be found in Appendix~\ref{app:random}.
Before the statement of the main result, which is the bound on the distillable key of a random private bit, let us first state the following technical fact
\begin{proposition}\label{th:eigvals}
    Let $\gamma_{ABA'B'}$ be defined as in Eq.~\eqref{eq:gamma}, let $X$ be an arbitrary
    random matrix of dimension $d^2$ and let $\mu_X$ denote the asymptotic
    eigenvalue density of $\sqrt{XX^\dagger}$. Then, as $d\to\infty$, $\gamma$
    has the eigenvalue density $\mu_\gamma$ given by
    \begin{equation}
        \mu_\gamma = \frac12 \mu_X + \frac12 \delta(0).
    \end{equation}
\end{proposition}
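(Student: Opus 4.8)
The plan is to diagonalize $\gamma$ essentially by hand, using the singular value decomposition of $X$, and then read off the limiting spectral distribution; the random-matrix content enters only through the known convergence of the singular-value density of $X$.

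First I would observe that the two middle block-rows and block-columns of the matrix in Eq.~\eqref{eq:gamma} (those indexed by the key states $|01\>$ and $|10\>$) vanish identically, so $\gamma$ acts nontrivially only on the $2d^2$-dimensional subspace spanned by $\{|00\>,|11\>\}\otimes\mathcal{H}_{A'B'}$. On that support it reduces to the $2\times 2$ block operator
\begin{equation}
\gamma = \tfrac{1}{2}M, \qquad M = \begin{pmatrix}\sqrt{XX^\dagger} & X \\ X^\dagger & \sqrt{X^\dagger X}\end{pmatrix},
\end{equation}
and it is on this support that the eigenvalue density is computed (the trivial zeros from the $|01\>,|10\>$ blocks are dropped). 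Next I would write the singular value decomposition $X = U_L \Sigma U_R^\dagger$ with $U_L,U_R$ unitary and $\Sigma = \mathrm{diag}(s_1,\dots,s_{d^2})\geq 0$, so that $\sqrt{XX^\dagger}=U_L\Sigma U_L^\dagger$ and $\sqrt{X^\dagger X}=U_R\Sigma U_R^\dagger$. Conjugating $M$ by the block-diagonal unitary $W = U_L\oplus U_R$ then collapses everything to
\begin{equation}
W^\dagger M W = \begin{pmatrix}\Sigma & \Sigma \\ \Sigma & \Sigma\end{pmatrix},
\end{equation}
which, being unitarily equivalent to $M$, has the same spectrum.

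From here I would read off the spectrum directly: for each standard basis vector $e_i$ (an eigenvector of the diagonal $\Sigma$ with eigenvalue $s_i$), the vectors $(e_i,e_i)^\t$ and $(e_i,-e_i)^\t$ are eigenvectors of $W^\dagger M W$ with eigenvalues $2 s_i$ and $0$ respectively. Hence $M$ has spectrum $\{2 s_1,\dots,2 s_{d^2}\}\cup\{0\}^{d^2}$, and $\gamma=\tfrac12 M$ carries the $d^2$ nonzero eigenvalues $\{s_i\}$ together with exactly $d^2$ zeros. Equivalently, the nonzero spectrum of $\gamma$ coincides with the singular values of $X$, i.e.\ the eigenvalues of $\sqrt{XX^\dagger}$. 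Assembling the empirical spectral distribution, at each finite $d$ it equals, exactly,
\begin{equation}
\mu_\gamma^{(d)} = \frac12\Big(\frac{1}{d^2}\sum_{i=1}^{d^2}\delta_{s_i}\Big) + \frac12\delta(0),
\end{equation}
and since $\tfrac{1}{d^2}\sum_i \delta_{s_i}$ is by definition the empirical spectral distribution of $\sqrt{XX^\dagger}$, which converges to $\mu_X$ as $d\to\infty$ (the random-matrix fact recalled in Appendix~\ref{app:random}), the claim $\mu_\gamma = \tfrac12\mu_X + \tfrac12\delta(0)$ follows.

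The algebra is short, so the main point to get right is the bookkeeping of multiplicities: the antisymmetric sector contributes precisely as many zero eigenvalues ($d^2$) as there are singular values, which is exactly what pins the weights at $\tfrac12,\tfrac12$ rather than some other split, and one must take the eigenvalue density on the genuine $2d^2$-dimensional support of $\gamma$. The only analytic input is the assumed convergence of the singular-value distribution of $X$ to $\mu_X$; everything else is an exact finite-dimensional identity valid for each realization of $X$.
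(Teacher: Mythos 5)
Your proof is correct, and its core linear algebra runs along a genuinely different track from the paper's. Both arguments first discard the trivially zero $\ket{01},\ket{10}$ sectors and reduce to the $2d^2$-dimensional block $\gamma'=\frac12\bigl(\begin{smallmatrix}\sqrt{XX^\dagger}&X\\ X^\dagger&\sqrt{X^\dagger X}\end{smallmatrix}\bigr)$; you are in fact more explicit than the paper that the $\frac12,\frac12$ weights only come out if the density is taken on this support rather than on the full $4d^2$-dimensional space, which is a point worth stating. From there the paper argues indirectly: it exhibits kernel vectors of the form $\bigl(\ket{\psi},\,-X^\dagger(\sqrt{XX^\dagger})^{-1}\ket{\psi}\bigr)$ to show that half of the eigenvalues of $\gamma'$ vanish, and then invokes the vanishing of the Schur complement $\gamma'/\sqrt{XX^\dagger}=0$ to identify the remaining eigenvalues with those of $\sqrt{XX^\dagger}$. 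You instead diagonalize exactly: writing $X=U_L\Sigma U_R^\dagger$ and conjugating by the block unitary $U_L\oplus U_R$ collapses $2\gamma'$ to $\bigl(\begin{smallmatrix}\Sigma&\Sigma\\ \Sigma&\Sigma\end{smallmatrix}\bigr)$, whose spectrum $\{2s_i\}_{i=1}^{d^2}\cup\{0\}^{d^2}$ is read off from the symmetric and antisymmetric combinations. Your route buys three things: it needs no inverse of $\sqrt{XX^\dagger}$, so the identity holds verbatim for singular $X$, whereas the paper's kernel vectors and Schur complement tacitly assume invertibility (true only almost surely for the random $X=U\sigma$ at hand, and needed for the paper's claim $\mathrm{rank}\,\gamma'=d^2$ to be an equality rather than an upper bound); it delivers the exact finite-$d$ spectrum with multiplicities for every realization, so the only asymptotic input is the convergence of the singular-value distribution of $X$ to $\mu_X$; and it makes the $\frac12,\frac12$ bookkeeping transparent. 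The paper's argument is terser if one takes the Schur-complement facts as given, but yours is self-contained.
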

\begin{proof}
    From the form of $\gamma_{ABA'B'}$ we can conclude that half of its
    eigenvalues are equal to zero. Therefore, it suffices to focus on the
    eigenvalues of the matrix
    \begin{equation}\label{eq:simple-proof}
        \gamma' = \frac12 \begin{pmatrix}
            \sqrt{XX^\dagger} & X \\
            X^\dagger & \sqrt{X^\dagger X}.
        \end{pmatrix}.
    \end{equation}
First we note that for any $\ket{\psi}$ we have
\begin{equation}
    \gamma' \begin{pmatrix}
    \ket{\psi} \\
    -X^\dagger (\sqrt{XX^\dagger})^{-1} \ket{\psi}
    \end{pmatrix} = 0.
\end{equation}
Hence $\rank{\gamma'}=d^2$, which gives us that half of the eigenvalues of $\gamma'$ are equal to zero.
From the fact that the Schur complement of the first block is $\gamma'/\sqrt{XX^\dagger}=0$, we recover
that the remaining eigenvalues of $\gamma'$ are those of $\sqrt{XX^\dagger}$.
\end{proof}

Additionally, we will need the following fact regarding the entropy of random quantum states sampled
from the Hilbert-Schmidt distribution from~\cite{bengtsson2017geometry,collins2011gaussianization,puchala2016distinguishability,zyczkowski2001induced}. 

\begin{proposition}\label{th:entropy}
    Let $\rho$ be a random mixed state of dimension $d$ sampled from the Hilbert-Schmidt distribution.
    Then, for large $d$ we have
    \begin{equation}
        \mathbb{E}\left( S(\rho) \right) = \log_2 d - \frac{1}{2 \ln 2} - O\left( \frac{\log_2 d}{d} \right)
    \end{equation}
\end{proposition}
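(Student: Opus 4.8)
The plan is to exploit the standard identification of the Hilbert--Schmidt ensemble with an induced measure: a Hilbert--Schmidt distributed state $\rho$ of dimension $d$ is the partial trace of a Haar-random pure state on $\mathbb{C}^d\otimes\mathbb{C}^d$ over an environment of \emph{equal} dimension, equivalently $\rho=GG^\dagger/\Tr(GG^\dagger)$ for a square complex Ginibre matrix $G$. This places us exactly in the regime in which the average von Neumann entropy admits a closed form.

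First I would invoke the exact average-entropy formula for the induced measure (Page's formula, as evaluated in \cite{zyczkowski2001induced,bengtsson2017geometry}): for subsystem dimensions $m\le n$ the mean entropy in nats equals $\sum_{k=n+1}^{mn}\tfrac1k-\tfrac{m-1}{2n}$. Specialising to $m=n=d$ gives
\begin{equation}
\ln 2\cdot\mathbb{E}(S(\rho))=\sum_{k=d+1}^{d^2}\frac1k-\frac{d-1}{2d}=H_{d^2}-H_d-\frac{d-1}{2d},
\end{equation}
where $H_N$ denotes the $N$-th harmonic number.

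Second, I would expand with $H_N=\ln N+\gamma+\tfrac1{2N}+O(N^{-2})$, so that $H_{d^2}-H_d=\ln d-\tfrac1{2d}+O(d^{-2})$, while $\tfrac{d-1}{2d}=\tfrac12-\tfrac1{2d}$. The two $1/(2d)$ contributions cancel, leaving $\ln 2\cdot\mathbb{E}(S(\rho))=\ln d-\tfrac12+O(d^{-2})$; dividing by $\ln 2$ yields $\mathbb{E}(S(\rho))=\log_2 d-\tfrac1{2\ln 2}+O(d^{-2})$, which is subsumed by the stated error $O(\log_2 d/d)$.

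The main obstacle appears only if one declines to cite the closed form and instead derives it from random matrix theory, which would also tie the statement to the Marchenko--Pastur machinery used elsewhere in the paper. Writing $\lambda_i=s_i/\Tr(GG^\dagger)$ with $s_i$ the eigenvalues of the Wishart matrix $GG^\dagger$, whose rescaled spectrum $s_i/d$ obeys the Marchenko--Pastur law $\mu_{MP}(x)=\tfrac1{2\pi}\sqrt{(4-x)/x}$ on $[0,4]$, and using $\Tr(GG^\dagger)\approx d^2$ together with the first moment $\int x\,d\mu_{MP}=1$, one obtains $S(\rho)\approx\log_2 d-\int_0^4 x\log_2 x\,\mu_{MP}(x)\,dx$. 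The hard part is then twofold: evaluating $\int_0^4 x\ln x\,\mu_{MP}(x)\,dx=\tfrac12$, which is precisely what supplies the constant $\tfrac1{2\ln 2}$, and rigorously controlling the finite-$d$ corrections and the concentration of $S(\rho)$ so as to legitimately replace the empirical spectral distribution by its limit and to extract the error term. Since the Page-formula route sidesteps both difficulties, I would adopt it as the primary argument.
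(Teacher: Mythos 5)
Your proof is correct, and it takes a genuinely different route from the paper's. The paper proceeds via the asymptotic moment formula $\mathbb{E}(\Tr\rho^k)=d^{1-k}\,\Gamma(1+2k)/[\Gamma(1+k)\Gamma(2+k)]\,(1+O(1/d))$ from the induced-measure literature and then extracts the entropy by the replica-style identity $\mathbb{E}(S(\rho))=-\lim_{k\to1}\partial_k\,\mathbb{E}(\Tr\rho^k)$ (in nats this gives $\ln d-[\psi(3)-\psi(2)]=\ln d-\tfrac12$, matching the constant); your fallback sketch via Marchenko--Pastur moments is essentially this same mechanism, since the Catalan numbers are exactly the MP moments at ratio one. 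Your primary argument instead invokes Page's exact finite-$d$ formula, $\ln 2\cdot\mathbb{E}(S(\rho))=H_{d^2}-H_d-\tfrac{d-1}{2d}$ for $m=n=d$, and expands harmonic numbers; the cancellation of the $1/(2d)$ terms is computed correctly. What your route buys is twofold: the result is exact at every finite $d$, so you obtain the sharper error $O(d^{-2})$ (strictly stronger than the stated $O(\log_2 d/d)$), and you sidestep the analytic care the paper's method tacitly requires --- namely uniform control in $k$ of the $O(1/d)$ correction so that the interchange of expectation, differentiation, and the $d\to\infty$ limit is legitimate. What the paper's route buys is that it reads off directly from the cited moment asymptotics and extends without change to average R\'enyi entropies $\mathbb{E}(\Tr\rho^k)$ for general $k$, which is the form in which the reference states its results. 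Both are valid; yours is the tighter and more self-contained derivation of this particular proposition.
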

\begin{proof}
    For large $d$ it can be shown~\cite{zyczkowski2001induced} that
    \begin{equation}
        \mathbb{E}(\Tr \rho^k) = d^{1-k} \frac{\Gamma(1+2k)}{\Gamma(1+k) \Gamma(2+k)} \left( 1 + O \left( \frac{1}{d} \right) \right).
    \end{equation}
    What remains is to note that
    \begin{equation}
        \mathbb{E}\left( S(\rho) \right) = - \lim_{k \to 1} \frac{\partial \Tr \rho^k}{\partial k},
    \end{equation}
    and the desired results follow from direct calculations.
\end{proof}

The main result of this section can be formulated as the following theorem
\begin{theorem}
    Let $\gamma_{ABA'B'}$ be a private state defined as in Eq.~\eqref{eq:gamma}, with the shielding system of dimension $d_s\otimes d_s$, let also $X=U\sigma_{A'B'}$,
    where $U$ is a Haar unitary be an arbitrary and $\sigma_{A'B'}$ is a random mixed
    state sampled from the Hilbert-Schmidt distribution. Then as $d\to\infty$ we get
    \begin{equation}
        I(AA':BB')_{\gamma} \to 2 + \frac{1}{2\ln 2}.
    \end{equation}
\end{theorem}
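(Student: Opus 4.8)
The plan is to expand the mutual information as $I(AA':BB')_\gamma = S(AA')_\gamma + S(BB')_\gamma - S(AA'BB')_\gamma$ and to determine the large-$d$ behaviour of the three terms separately. Each of them individually diverges like $\log_2 d$, so the entire content of the theorem lies in how these divergences cancel. I expect $S(AA')$ and $S(BB')$ to each behave as $1+\log_2 d$ (one bit from the classical key register plus a nearly maximally mixed shield), while the global entropy behaves as $2\log_2 d - \frac{1}{2\ln 2}$; subtracting then leaves exactly $2 + \frac{1}{2\ln 2}$.

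First I would compute the global entropy $S(AA'BB')_\gamma = S(\gamma)$. By Proposition~\ref{th:eigvals} the nonzero spectrum of $\gamma$ coincides with the spectrum of $\sqrt{XX^\dagger}$; since $X = U\sigma$ with $U$ unitary one has $\sqrt{XX^\dagger} = U\sigma U^\dagger$, whose eigenvalues are exactly those of $\sigma$. Hence $S(\gamma) = S(\sigma)$, and because $\sigma$ is a Hilbert--Schmidt random state of dimension $d^2$, Proposition~\ref{th:entropy} applied at dimension $d^2$ gives $\mathbb{E}\,S(\gamma) = 2\log_2 d - \frac{1}{2\ln 2} - O(\log_2 d / d^2)$.

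Next I would compute the marginals. Tracing out $B$ and $B'$ annihilates the off-diagonal key terms (those carrying $|i\rangle\langle j|_B$ with $i\neq j$), leaving the block-diagonal state $\rho_{AA'} = \frac12 \proj{0}_A \otimes \tau_0 + \frac12 \proj{1}_A \otimes \tau_1$ with $\tau_i = \Tr_{B'}(U_i\sigma U_i^\dagger)$. Because the two $A$-blocks are orthogonal, $S(AA') = 1 + \frac12 S(\tau_0) + \frac12 S(\tau_1)$, and the analogous identity holds for $BB'$. The decisive observation is that each $\tau_i$ is the reduction onto a $d$-dimensional factor of the $d^2$-dimensional random state $U_i\sigma U_i^\dagger$, which is again Hilbert--Schmidt distributed by unitary invariance. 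Viewing this as the reduction of a Haar-random purification, the complementary system has dimension $d\cdot d^2 = d^3 \gg d$, so by the Page--Lubkin estimate \cite{bengtsson2017geometry,zyczkowski2001induced} one gets $\mathbb{E}\,S(\tau_i) = \log_2 d - O(1/d^2) \to \log_2 d$. Thus $S(AA'), S(BB') \to 1 + \log_2 d$.

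Assembling the three pieces, $I(AA':BB') = (1+\log_2 d) + (1+\log_2 d) - (2\log_2 d - \frac{1}{2\ln 2}) \to 2 + \frac{1}{2\ln 2}$. The main obstacle I anticipate is controlling the subleading terms of the marginals: the answer hinges on the shield marginals carrying no constant $O(1)$ correction beyond $\log_2 d$, which holds only because their purifying environment (of dimension $d^3$) is polynomially larger than the shield dimension $d$, forcing the Page correction to be $O(1/d^2)$ rather than the $\frac{1}{2\ln2}$ constant one would obtain from a balanced bipartition. Managing these three separately divergent quantities, and justifying that the remainder terms in Propositions~\ref{th:eigvals}--\ref{th:entropy} and in the Page estimate all vanish in the limit (in expectation, with concentration yielding the almost-sure statement), is where the care is needed.
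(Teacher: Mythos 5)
Your proof is correct and takes essentially the same route as the paper: the same three-term expansion of $I(AA':BB')_{\gamma}$, the same identification of the nonzero spectrum of $\gamma$ with that of the random state $\sigma$ via Propositions~\ref{th:eigvals} and~\ref{th:entropy} (giving $S(AA'BB')=2\log_2 d_s-\frac{1}{2\ln 2}-O(\log_2 d_s/d_s^2)$), and the same observation that the marginals $\gamma_{AA'},\gamma_{BB'}$ are asymptotically maximally mixed, so that each contributes $1+\log_2 d_s$. The only, immaterial, difference is in how the marginal entropies are justified: you use the exact classical-quantum decomposition $S(AA')=1+\frac12 S(\tau_0)+\frac12 S(\tau_1)$ together with the Page--Lubkin estimate (correctly noting that the effective environment of dimension $d_s^3$ makes the correction $O(1/d_s^2)$ rather than an $O(1)$ constant), whereas the paper invokes the almost-sure convergence of partial traces of random states to the maximally mixed state from~\cite{nechita2018almost}.
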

\begin{proof}
We first justify the entropy of the full state $\gamma$. Recall that $X=U\sigma_{A'B'}$ with $\sigma_{A'B'}\ge 0$ and $\Tr\sigma_{A'B'}=1$.
Then
\begin{equation}
\sqrt{XX^\dagger}=\sqrt{U\sigma_{A'B'}^2U^\dagger}=U\sigma_{A'B'}U^\dagger,
\end{equation}
hence $\sqrt{XX^\dagger}$ and $\sigma_{A'B'}$ have identical spectra. By Proposition~\ref{th:eigvals}, the non-zero eigenvalues of $\gamma$ coincide with those of $\sqrt{XX^\dagger}$ (the remaining eigenvalues are zeros, which do not contribute to the von-Neumann entropy). Therefore
\begin{equation}
S(AA'BB')_\gamma = S(\gamma) = S(\sqrt{XX^\dagger}) = S(\sigma_{A'B'}).
\end{equation}
Applying Proposition~\ref{th:entropy} with $d=d_s^2$ yields
\begin{equation}\label{eq:global-entropy-random-pdit}
    S(AA'BB')_{\gamma} = \log_2 d_s^2 - \frac{1}{2 \ln 2} - O\left(\frac{\log_2 d_s^2}{d_s^2}\right).
\end{equation}

All is left is to consider the terms  $S(AA')_{\gamma}$ and $S(BB')_{\gamma}$. First, we observe that
\begin{equation}
    \gamma_{AA'} = \frac12 \begin{pmatrix}
        \Tr_{B'} U \sigma_{A'B'} U^\dagger & 0\\
        0 & \Tr_{B'}\sigma_{A'B'}
    \end{pmatrix},
    \label{eq:rand_subsystem}
\end{equation}
and similarly for $\gamma_{BB'}$. As the ensemble of random quantum states is
unitarily invariant, $U\sigma_{A'B'}U^\dagger$ is also a random quantum state.
Based on the results from~\cite{nechita2018almost} we have that the partial
trace of a $d_s^2 \times d_s^2$ random quantum state is almost surely the maximally
mixed state. Hence
\begin{equation}\label{eq:entropy-of-a-part-pdit}
    S(AA')_{\gamma} = S(BB')_{\gamma} = \log_2 2d_s - o(1).
\end{equation}
Putting all of these facts together, we obtain the desired result
\begin{equation}
    I(AA':BB')_{\gamma} \to 2\log_2 2 + \frac{1}{2 \ln 2}.
    \label{eq:bound}
\end{equation}
\end{proof}
We then have the immediate corollary.
\begin{corollary}
\label{cor:tighter}
For a random private bit $\gamma_2$ of an arbitrarily large dimension
of the shield, in the limit of large $d_s \rightarrow \infty$, there is
\begin{equation}
    K_D(\gamma_2) \leq \log_2 2 +\frac{1}{4 \ln 2}
\end{equation}
\end{corollary}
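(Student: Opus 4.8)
The plan is to combine the mutual-information value computed in the preceding theorem with the standard fact that the distillable key is controlled by one half of the quantum mutual information across the relevant bipartite cut. Concretely, the theorem just proven establishes that for a random private bit with shielding system of dimension $d_s \otimes d_s$ one has $I(AA':BB')_{\gamma} \to 2\log_2 2 + \frac{1}{2\ln 2}$ as $d_s \to \infty$, where the cut $AA':BB'$ groups each party's key part together with its half of the shield. This asymptotic value is all the analytic input the corollary requires.

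First I would recall the bound $K_D(\rho) \leq E_{\sq}(\rho)$, where $E_{\sq}$ denotes the squashed entanglement, as established in the literature cited earlier; this is the precise meaning of the introductory remark that half of the mutual information upper bounds the distillable key \cite{squashed}. I would then invoke the elementary inequality $E_{\sq}(\rho_{AB}) \leq \frac12 I(A:B)_{\rho}$, which follows immediately from the definition of squashed entanglement upon choosing the trivial one-dimensional extension of the conditioning environment. Chaining the two inequalities yields, for every finite $d_s$,
\begin{equation}
    K_D(\gamma_2) \leq \tfrac12 I(AA':BB')_{\gamma},
\end{equation}
with the cut identified as $AA':BB'$.

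It then remains to pass to the limit $d_s \to \infty$ on the right-hand side and substitute the value supplied by the theorem, giving
\begin{equation}
    \lim_{d_s \to \infty} \tfrac12 I(AA':BB')_{\gamma} = \tfrac12\Bigl(2\log_2 2 + \tfrac{1}{2\ln 2}\Bigr) = \log_2 2 + \tfrac{1}{4\ln 2},
\end{equation}
which is exactly the asserted bound.

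The only genuinely delicate point is the legitimacy of taking this limit: one must ensure that the inequality $K_D(\gamma_2) \leq \frac12 I(AA':BB')_{\gamma}$ holds dimension by dimension, so that passing to the limit on the right produces a valid bound on the limiting left-hand quantity. Since the squashed-entanglement bound is valid for each fixed $d_s$ and the convergence of $I(AA':BB')_{\gamma}$ is governed by the explicit $O(\log_2 d_s / d_s^2)$ remainder inherited from Proposition \ref{th:entropy}, no further subtlety arises beyond elementary monotonicity of the limit; the substantive work has already been discharged in the preceding theorem, so this final step is essentially bookkeeping.
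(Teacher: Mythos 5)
Your proposal is correct and follows essentially the same route as the paper's own proof: bound $K_D$ by squashed entanglement, relax $E_{\sq}$ to half the mutual information via the trivial extension, and substitute the asymptotic value $2\log_2 2 + \frac{1}{2\ln 2}$ of $I(AA':BB')_{\gamma}$ from the preceding theorem. Your extra remark on the legitimacy of passing to the limit $d_s \to \infty$ is careful bookkeeping that the paper leaves implicit, but it does not constitute a different argument.
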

\begin{proof}
We use the bound by squashed entanglement $E_{sq}(\rho):= \inf_{\rho_{ABE}, Tr_E\rho_{ABE}=\rho_{AB}}\frac{1}{2}I(A:B|E)_{\rho_{ABE}}$ \cite{squashed}, by noticing,
that it is upper bounded by $\frac{1}{2}I(A:B)$ (we consider half of the
bound on the mutual information given in (\ref{eq:bound})).
\end{proof}

For the private states, the above bound yields an independent result than the one given in Theorem \ref{thm:first_bound}.
Indeed, the one-way repeater rate of a bipartite state $\rho_{AB}$
can not be larger than the distillable key $K_D(\rho_{AB})$ \cite{BCHW15}. This is a trivial bound: any key-repeater protocol can be viewed as a particular LOCC protocol between $A$ and $B$. As such, it can
not increase the initial amount of key in the cut $A:B$. We thus have for a generic  private state $\gamma$ (not necessarily irreducible) a trivial bound
\begin{align}
    &R^{\rightarrow}(\gamma_{2},\gamma_{2})\leq K_D(\gamma_{2})\leq \nonumber\\&1 + {1\over 4 \ln 2}\approx 1 +0.360674.
\end{align}
Let us note here that the bound
is far from being small, in contrast to the bound for the private state taken at random, due to the different randomization procedures proposed in \cite{Christandl2020}. This is because our technique does not immediately imply that the private state has Hermitian $X$,
nor that its positive and negative parts are random states.
We believe, however, that the above bound can be made tighter.


\section{Localizable private randomness for a generic local independent state}
\label{sec:ibits}
In this section, we study the rates of private randomness that can be distilled from a generic independent bit. We base our results on the following idea. 
While (half of the) mutual information is merely a weak bound on the private key, this function reports the exact amount of private randomness content of a quantum state in various scenarios \cite{YHW19}.

Let us recall here that the local independent bit (in a bipartite setting) has the form
\begin{equation}
    \alpha_{AA'B'}= \sum_{i,j=0}^{1}{1\over 2}|i\>\<j|_A\otimes U_i \sigma_{A'B'} U_j^{\dagger}.
    \label{eq:random_alpha}
\end{equation}
where $U_i$ are unitary transformations acting on system $A'B'$ and $\sigma$ is an arbitrary state on the latter system. We assume that the system $A'B'$ is of dimension $d_s\otimes d_s$. 
In the above we borrow the notation of \cite{YHW19} where $\alpha$ denotes the independent states, thus $\alpha$ should not be confused with the parameter of the sandwiched R\'enyi relative entropy 
given in Eq. (\ref{eq:Renyi}).
In the matrix form, it is as follows
\begin{equation}\label{eq:ibit}
   \alpha_{AA'B'}={1\over 2} \left(\begin{array}{cc}
    \sqrt{XX^{\dagger}}    & X \\
        X^{\dagger} & \sqrt{X^{\dagger}X}
    \end{array}\right)
\end{equation}
where $X = U_0\sigma_{A'B'}U_1$. 
As it was observed in the case of the private bit, we can safely assume that $U_0 = \id$,
and $U_1$ is arbitrary because $\sigma_{A'B'}$ will be taken at random.

\begin{proposition}
\label{prop:last}
    Consider $\alpha_{AA'B'}$ as in Eq.~(\ref{eq:ibit}). Then, $\tr_{B'} \alpha_{AA'B'}$ converges almost surely to the maximally mixed state as $d \to \infty$
    \begin{equation}
        \lim_{d\to\infty} \|2d \tr_{B'}\alpha_{AA'B'} - \1 \| =0.
    \end{equation}
    Moreover, we have
    \begin{equation}
        \|2d \tr_{B'}\alpha_{AA'B'} - \1 \|_\infty = O(d^{-1/2})
    \end{equation}
    \begin{proof}
        We start by proving the limit. The explicit form of the partial trace reads
        \begin{equation}
            \tr_{B'} \alpha_{AA'B'} = \frac12 \begin{pmatrix}
                    \tr_{B'} \sigma_{A'B'} & \tr_{B'} U_0 \sigma_{A'B'} U_1 \\
                    \tr_{B'} U_1^\dagger \sigma_{A'B'} U_0^\dagger & \tr_{B'} U_1^\dagger \sigma_{A'B'} U_1
            \end{pmatrix}.
        \end{equation}
        We immediately note that the diagonal blocks are partial traces of independent random states. Hence, by~\cite{nechita2018almost}, they almost surely converge to the maximally mixed state. Formally, we have (only for the first block)
        \begin{equation}
            \lim_{d\to\infty}\| d \tr_{B'} \sigma_{A'B'} - \1\| = 0
        \end{equation}
            It remains to control the off-diagonal block $B:=\tr_{B'}(U_0\sigma_{A'B'}U_1)$ (we set $U_0=\id$).
            Conditioned on $\sigma_{A'B'}$, Haar invariance implies $\mathbb{E}_{U_1}[B]=0$. Moreover, using the second moment identity
            $\mathbb{E}_{U_1}[U_{ab}\overline{U_{cd}}]=\delta_{ac}\delta_{bd}/d^2$ for $U_1\in U(d^2)$ and a direct index calculation, we obtain
            \begin{equation}
            \mathbb{E}_{U_1}\|B\|_2^2=\frac{1}{d}\Tr(\sigma_{A'B'}^2).
            \end{equation}
            For $\sigma_{A'B'}$ sampled from the Hilbert--Schmidt distribution on $\mathbb{C}^{d^2}$ we have $\mathbb{E}\Tr(\sigma_{A'B'}^2)=O(d^{-2})$ (see, e.g., induced-state moment formulas in~\cite{zyczkowski2001induced}), hence
            $\mathbb{E}\|B\|_2^2=O(d^{-3})$ and therefore $\|B\|_\infty\le \|B\|_2 = O(d^{-3/2})$.
            Consequently $d\|B\|_\infty = O(d^{-1/2})\to 0$.

            The diagonal blocks are reduced states of independent Hilbert--Schmidt random states; by concentration for induced states (see~\cite{nechita2018almost}),
            \begin{equation}
            \begin{split}
            \|d\,\tr_{B'}\sigma_{A'B'}-\1\|_\infty & = o(1),\\
            \|d\,\tr_{B'}(U_1^\dagger\sigma_{A'B'}U_1)-\1\|_\infty & =o(1),
            \end{split}
            \end{equation}
            almost surely. Putting diagonal and off-diagonal bounds together yields
            \begin{equation}
            \lim_{d\to\infty} \|2d\,\tr_{B'}\alpha_{AA'B'}-\1\|_\infty=0.
            \end{equation}
    \end{proof}
\end{proposition}

In what follows, we will show achievable rates of private randomness for a randomly chosen ibit. They are encapsulated in the following theorem.\\
\begin{theorem}
    Let $\alpha_{AA'B'}$ given by Eq. (\ref{eq:random_alpha}) be the randomly generated state by picking up a random state $\sigma_{A'B'}$ and a unitary transformation $U_1$ due to Haar measure. Let also $R_G := 1 + \frac{1}{2\ln 2}$.
    Then for all but finitely many $d_s\in \mathbb{N}$ the following bounds are achievable
    \begin{enumerate}
        \item for no communication and no noise,\\ 
        $R_A < R_G$, $R_B < \frac{1}{2\ln 2}$,\\
        $R_A + R_B \leq R_G$,
        \item for free noise, but no communication, \\
        $R_A < R_G$, $R_B < \frac{1}{2\ln 2}$,\\
        $R_A + R_B \leq R_G$,
        \item for free noise and free communication,\\
        $R_A,R_B \leq  R_G$, $R_A + R_B \leq R_G$,
        \item for free communication but no noise,\\
        $R_A,R_B \leq  R_G$, $R_A + R_B \leq R_G$. 
    \end{enumerate}
\end{theorem}
\begin{proof}
    At first observe that for the considered state $\alpha_{AA'B'}$, we have
    \begin{align}
        &\log_2 |AA'B'| - S(AA'B') = \\
        & \log_2 2d_s^2 - \log_2 d_s^2 + \frac{1}{2\ln 2} + O\left( \frac{\log_2 d_s^2}{d_s^2} \right).
    \end{align}
    Thus, its global purity satisfies
    \begin{equation}
        \log_2 |AA'B'| - S(AA'B') \geq 1 + \frac{1}{2\ln 2}.
    \end{equation}
    In the above we have used Equation \eqref{eq:global-entropy-random-pdit} and the fact, that
    $S(AA'B')$ for an independent state $\alpha_{AA'B'}$ equals
    $S(AA'BB')$ for a corresponding private bit (i.e., generated by the same twisting $\tau$ and from the same state $\sigma$). Hence, $R_G = 1 + \frac{1}{2\ln 2}$ lowerbounds global purity of the considered $\alpha_{AA'B'}$ state.
    We are now ready to prove each case of the theorem separately. \\
    1. By Theorem \ref{thm:priv_rand}, case $1$, the necessary condition for the rate of private randomness $R_A$ to be achievable, is $R_A \leq \log_2 |AA'| - \max\{0, S(AA'|B')\}$. We want to simplify this formula. Observe that 
    \begin{align}
        & S(AA'|B') = S(AA'B') - S(B') \geq \\
        & \log_2 d_s - \frac{1}{2\ln 2} - C_{AA'B'}\frac{\log_2d_s^2}{d_s^2},
    \end{align}
    for some constant $C_{AA'B'}$. In the above, we have used equations \eqref{eq:global-entropy-random-pdit} and \eqref{eq:entropy-of-a-part-pdit} to bound terms $S(AA'B')$ and $S(B')$ respectively. Surely, $\log_2 d_s - \frac{1}{2\ln 2} - C_{AA'B'}\frac{\log_2d_s^2}{d_s^2} \geq 0$ for all but finitely many $d_s$. Thus we can consider only the case in which $\max\{0, S(AA'|B')\} = S(AA'|B')$. Now, we aim to upperbound the $S(AA'|B')$ term.
    Notice that 
    \begin{align}
        & S(AA'|B') = S(AA'B') - S(B') \leq \\
        & \log_2 d_s^2 - \frac{1}{2\ln 2} - \log_2 d_s + f_{B'}(d_s) = \\
        & \log_2d_s - \frac{1}{2\ln 2} + f_{B'}(d_s),
    \end{align}
    for $f_{B'}(d_s) = o(1)$. Hence, we have 
    \begin{align}
        &\log_2 |AA'| - \max\{0, S(AA'|B')\} \geq \\
        &\log_2 2d_s - \log_2d_s + \frac{1}{2\ln 2} - f_{B'}(d_s) = \\
        & 1 + \frac{1}{2\ln 2} - f_{B'}(d_s),
    \end{align}
    so the rate $R_A$ satisfying $R_A \leq 1 + \frac{1}{2\ln 2} - f_{B'}(d_s)$ is achievable. This implies that all rates satisfying $R_A < 1 + \frac{1}{2\ln 2}$ are achievable for all but finitely many $d_s$.
    On the other hand, by analogous arguments, one can show that all rates $R_B$ satisfying 
    \begin{equation}
        R_B \leq \frac{1}{2 \ln 2} - f_{AA'}(d_s),
    \end{equation}
    where $f_{AA'}(d_s) = o(1)$, are achievable. Once again, this implies that all rates $R_B$ satisfying $R_B < 1 + \frac{1}{2\ln 2}$ are achievable for all but finitely many $d_s$. Finally, notice that the pair $(R_A, R_B)$ is achievable only if the constraint $R_A + R_B \leq R_G$ is satisfied.\\
    2. Proof of this case is the same as for case 1.\\
    3. Follows directly from Theorem \ref{thm:priv_rand} case 3.\\
    4. For all but finitely many $d_s$, inequalities $S(AA'B') \geq S(AA')$ and $S(AA'B') \geq S(B')$ are satisfied. Hence, formulas from Theorem \ref{thm:priv_rand} case 4 for achievable rates $R_A$ and $R_B$, reduce to $\log_2 |AA'B'| - S(AA'B')$. As we showed previously, $\log_2 |AA'B'| - S(AA'B') \geq 1 + \frac{1}{2 \ln 2} = R_G$. Thus, for all but finitely many $d_s$ the rates  satisfying $R_A, R_B \leq R_G$, $R_A + R_B \leq R_G$ are achievable.
\end{proof}
From the above theorem, we can see that
in case of the system $A$, for asymptotically large dimension $d_s$ the value 
$1+\frac{1}{2\ln 2}$ 
can be approached in all four cases. In the case of system $B$ we invoked only the (lower and upper) bounds on the entropy of involved systems ($AA'$). Hence, the bounds are not tight, possibly less than the maximal achievable ones.

\section{Discussion}
\label{sec:discussion}
We have generalized the bound by M. Christandl and R. Ferrara \cite{CF17} to the case of arbitrary key correlated states. 
We then show a sequence of relaxation of this bound, from which it follows that the repeated key of a key correlated state can not be larger than twice one-way distillable entanglement plus
the max-relative entropy of its attacked state.

We further ask how big the key content of a random private bit is, which need not be irreducible \cite{HCRS18}. 
A not irreducible private bit can have more distillable key than $1$.
It turned out that the amount of key is bounded by a constant factor independent of the dimension of the shielding system.
It is interesting if the constant $\approx 0.36$ can be improved. One could also 
ask if this randomization technique also results in a state for which the {\it repeated} key is vanishing with a large dimension of the shield, as it was shown by a different technique of randomization in \cite{Christandl2020}. 
Recent results on private randomness generation \cite{YHW19} also allow us to estimate the private randomness content of generic independent bits. Generalizing this result to independent states of larger dimensions would be the next important step. 
Finally, we emphasize that our approach is generic. That is, it appears that any other strong-converse bound on the quantum distillable key may give rise to a new, possibly tighter
bound on the one-way quantum key repeater rate. Hence, recent strong-converse bound on quantum privacy amplification \cite{StrongConverse} paves the way for future research in this direction.
It would be important to extend this technique to the two-way quantum key repeater rate.

\acknowledgements{
KH acknowledges Roberto Ferrara, Siddhartha Das and Marek Winczewski for helpful discussion. KH acknowledges the Fulbright
Programm and Mark Wilde for hospitality during the Fulbright
scholarship at the School of Electric and Computer
Engineering of the Cornell University.
We acknowledge Sonata Bis 5 grant
(grant number: 2015/18/E/ST2/00327) from the National Science Center.
We acknowledge partial support by the Foundation for Polish Science (IRAP project, ICTQT, contract no. MAB/2018/5, co-financed by EU within Smart Growth Operational Programme). The ’International Centre for Theory of Quantum Technologies’ project (contract no. MAB/2018/5) is carried out within the International Research Agendas Programme of the Foundation for Polish Science co-financed by the European Union from the funds of the Smart Growth Operational Programme, axis IV: Increasing the research potential (Measure 4.3). KH acknowledges National Science Centre,
Poland grant OPUS UMO-2023/49/B/ST2/02468. 
}
\bibliographystyle{apsrev4-1}
\bibliography{references.bib}

\appendix

\section{Random quantum objects}\label{app:random}
In this section, we provide a short introduction to random matrices. The scope
is limited to concepts necessary for the understanding of our result.

\subsection{Ginibre matrices}

We start off by introducing the Ginibre random matrices
ensemble~\cite{ginibre1965statistical}. This ensemble is at the core of a vast
majority of algorithms for generating random matrices presented in later
subsections. Let $(G_{ij})_{1 \leq i \leq m, 1 \leq j \leq n}$ be a $m\times n$ table of independent identically distributed (i.i.d.) random variables on
$\mathbb{C}$. The field $\mathbb{K}$ can be either of $\mathbb{R}$, $\C$ or
$\mathbb{Q}$. With each of the fields, we associate a Dyson index $\beta$ equal
to $1$, $2$, or $4$, respectively. Let $G_{ij}$ be i.i.d random variables with
the real and imaginary parts sampled independently from the distribution
$\mathcal{N}(0, \frac{1}{\beta})$. Hence, $G \in \Lrm(\XX, \YY)$, where matrix
$G$ is
\begin{equation}
P(G) \propto \exp(-\Tr G G^\dagger).
\end{equation}
This law is unitarily invariant, meaning that for any unitary matrices $U$ and
$V$, $G$, and $UGV$ are equally distributed. It can be shown that for $\beta=2$ the eigenvalues of $G$ are uniformly distributed over the unit disk on the
complex plane~\cite{tao2008random}.

\subsection{Wishart matrices}

Wishart matrices form an ensemble of random positive semidefinite matrices. They
are parametrized by two factors. First is the Dyson index $\beta$, which is equal
to one for real matrices, two for complex matrices, and four for symplectic
matrices. The second parameter, $K$, is responsible for the rank of the
matrices. They are sampled as follows
\begin{enumerate}
\item Choose $\beta$ and $K$.

\item Sample a Ginibre matrix $G\in \Lrm(\XX, \YY)$ with the Dyson index $\beta$
and $\mathrm{dim}(\XX) = d$ and $\mathrm{dim}(\YY)=Kd$.

\item Return $W = GG^\dagger$.
\end{enumerate}

Sampling this ensemble of matrices will allow us to sample random quantum
states. This process will be discussed in further sections. Aside from their
construction, we will not provide any further details on Wishart matrices, as
this falls outside the scope of this work.

\subsection{Circular unitary ensemble}
Circular ensembles are measures on the space of unitary matrices. Here, we focus
on the circular unitary ensemble (CUE), which gives us the Haar measure on the
unitary group. In the remainder of this section, we will introduce the algorithm
for sampling such matrices.

There are several possible approaches to generating random unitary matrices
according to the Haar measure. One way is to consider known parametrizations of
unitary matrices, such as the Euler~\cite{zyczkowski1994random} or
Jarlskog~\cite{jarlskog2005recursive} ones. Sampling these parameters from
appropriate distributions yields a Haar random unitary. The downside is the long
computation time, especially for large matrices, as this involves a lot of
matrix multiplications. We will not go into this further; instead, we refer the
interested reader to the papers on these parametrizations.

Another approach is to consider a Ginibre matrix $G \in \Lrm(\XX)$ and its polar
decomposition $G=U P$, where $U \in \Lrm(\XX)$ is unitary, and $P$ is a positive
matrix. The matrix $P$ is unique and given by $\sqrt{G^\dagger G}$. Hence, assuming $P$ is invertible, we could recover $U$ as
\begin{equation}
U = G (G^\dagger G) ^{-\frac{1}{2}}.
\end{equation}
As this involves the inverse square root of a matrix, this approach can be
potentially numerically unstable.

The optimal approach is to utilize the QR decomposition of $G$, $G=QR$, where $Q
\in \Lrm(\XX)$ is unitary and $R \in \Lrm(\XX)$ is upper triangular. This
procedure is unique if $G$ is invertible, and we require the diagonal elements of
$R$ to be positive. As typical implementations of the QR algorithm do not
consider this restriction, we must enforce it ourselves. The algorithm is as
follows
\begin{enumerate}
\item Generate a Ginibre matrix $G \in \Lrm(\XX)$, $\mathrm{dim}(\XX) = d$

\item Perform the QR decomposition obtaining $Q$ and $R$.

\item Multiply the $i$\textsuperscript{th} column of $Q$ by $r_{ii}/|r_{ii}|$.
\end{enumerate}
This gives us a Haar distributed random unitary. For a detailed analysis of this
algorithm, see~\cite{mezzadri2006generate}. This procedure can be generalized in
order to obtain a random isometry. The only required change is the dimension of
$G$. We simply start with $G \in \Lrm(\XX, \YY)$, where $\dim(\XX)\geq
\dim(\YY)$.

\subsection{Random mixed quantum states}

In this section, we discuss the properties and methods of generating mixed random
quantum states.

Random mixed states can be generated in one of two equivalent ways. The first
one comes from the partial trace of random pure states. Suppose we have a pure
state $\ket{\psi} \in \XX \otimes \YY$. Then we can obtain a random mixed as
\begin{equation}
\rho = \tr_\YY \ketbra{\psi}{\psi}.
\end{equation}
Note that in the case $\dim(\XX)=\dim(\YY)$ we recover the (flat)
Hilbert-Schmidt distribution on the set of quantum states.

An alternative approach is to start with a Ginibre matrix $G \in \Lrm(\XX,
\YY)$. We obtain a random quantum state $\rho$ as
\begin{equation}
\rho = GG^\dagger/\Tr(GG^\dagger).
\end{equation}
It can be easily verified that this approach is equivalent to the one utilizing
random pure states. First, note that in both cases, we start with $\dim(\XX)
\dim(\YY)$ complex random numbers sampled from the standard normal
distribution. Next, we only need to note that taking the partial trace of a
pure state $\ket{\psi}$ is equivalent to calculating $AA^\dagger$ where $A$ is
a matrix obtained from reshaping $\ket{\psi}$.

The properties of these states have been extensively studied. We will omit
stating all the properties here, and refer the reader
to~\cite{wootters1990random,zyczkowski2001induced,
sommers2004statistical,puchala2016distinguishability,zhang2017average,zhang2017average2}.

\appendix
\end{document}